\newcommand{\customlabel}[2]{%
   \protected@write \@auxout {}{\string \newlabel {#1}{{#2}{\thepage}{#2}{#1}{}} }%
   \hypertarget{#1}{#2}
}
\theoremstyle{definition}
\newtheorem{proposition}{Proposition}[section]
\newtheorem{remark}{Remark}[section]
\newcommand{\K}{\mathbf{K}}
\newcommand{\C}{\bm{C}}
\newcommand{\Y}{\bm{y}}
\newcommand{\ppsi}{\bm{\psi}}
\newcommand{\CC}{\mathrm{C}}
\newcommand{\CCC}{\mathbf{C}}
\newcommand{\YY}{\mathbf{Y}}
\newcommand{\ZZ}{\mathbf{Z}}
\newcommand{\Z}{\bm{z}}
\newcommand{\A}{\mathcal{A}}
\newcommand{\B}{\mathcal{B}}
\newcommand{\SSS}{\mathcal{S}}
\newcommand{\s}{\bm{s}}
\newcommand{\mbQ}{\mathbf{Q}}
\newcommand{\kl}{Karhunen-Lo\`{e}ve }
\newcommand{\kle}{Karhunen-Lo\`{e}ve expansion }
\newcommand{\supvec}{^{\operatorname{vec}}}
\newcommand{\cov}{\operatorname{cov}}
\newcommand{\trace}{\operatorname{trace}}
\newcommand{\var}{\operatorname{var}}
\newcommand{\diag}{\operatorname{diag}}
\newcommand{\ar}{^A}
\begin{document}

\begin{frontmatter}

\title{{A Criterion for Aggregation Error for Multivariate Spatial Data}}
\author[1]{Ranadeep Daw \footnote{Corresponding author. Email: rdaw@uwf.edu}}
\author[2]{Jonathan R. Bradley}
\author[3]{Christopher K. Wikle}
\author[3,4]{Scott H. Holan}
\affiliation[1]{
Department of Mathematics and Statistics, University of West Florida, Peensacola, FL, USA}
\affiliation[3]{Department of Statistics, University of Missouri, Columbia, MO, USA }
\affiliation[2]{Department of Statistics, Florida State University, Tallahassee, FL, USA }
\affiliation[4]{Research and Methodology Directorate, U.S. Census Bureau, {Washington, D.C.,} USA}

\begin{abstract}
{The criterion for aggregation error (CAGE) is an important metric that aims to measure errors that arise in multiscale (or multi-resolution) spatial data, referred to as the modifiable areal unit problem and the ecological fallacy. Specifically, CAGE is a measure of between scale variance of eigenvectors in a Karhunen-Lo\'{e}ve expansion (KLE), motivated by a theoretical result, referred to as the ``null-MAUP-theorem,'' that states that the MAUP/ecological fallacy are not present when this variance is zero. CAGE was originally developed for univariate spatial data, but its use has been applied to multivariate spatial data without the development of a null-MAUP-theorem in the multivariate spatial setting. To fill this gap, we provide theoretical justification for a multivariate CAGE (MVCAGE), which includes multiscale multivariate extensions of the KLE, Mercer's theorem, and the-null-MAUP theorem. Additionally, we provide technical results that demonstrate that the MVCAGE is preferable to spatial-only CAGE, and extend commonly used basis functions used to compute CAGE to the multivariate spatial setting. Empirical results are provided to demonstrate the use of MVCAGE for uncertainty quantification and regionalization}
\end{abstract}

\begin{keyword}
Spatial regionalization \sep \kle \sep Orthogonal basis functions
\end{keyword}

\end{frontmatter}

\section{Introduction}
%

{The Modifiable Areal Unit Problem (MAUP) and the ecological fallacy can be interpreted as a manifestation of Simpson's Paradox in the spatial statistics setting. In particular, the MAUP occurs when inferential conclusions based on data aggregated to one level, say counties, differs from the conclusions using data aggregated to a different support, say census tracts. The ecological fallacy is similar, which occurs when point-referenced data produces different conclusions than aggregated data. The literature is well-aware of these issues with work dating back to Gehike and Biehl (1934) and 	\citep{openshaw1979million}, and has since become a common consideration \citep[e.g., see][]{gotway2002combining, banerjee2004hierarchical, wikle2005combining,robinson2009ecological, bradley2017regionalization,zhou2023bayesian} and is covered in standard textbooks \citep{cressie1993statistics,cressie2015statistics,waller2004applied,banerjee2004hierarchical}. The reason for this focus in the literature, is that assessing these types of spatial aggregation error is critically important for producing meaningful subject matter conclusions. 

	
Accounting for the MAUP in a likelihood framework is particularly difficult. \citet{bradley2017regionalization} introduced a metric that measures the degree at which the MAUP is present referred to the Criterion for AGgregation Error (CAGE). The CAGE is motivated by a theoretical result that states that no MAUP (or ecological fallacy) will be present provided that the eigenvectors are piecewise constant over the areal units within the spatial support \citep[e.g, see Theorem ][Proposition~2]{bradley2017regionalization}, and we refer to result as the ``null-MAUP-theorem.'' The CAGE has been used in a variety of context, such as boundary detection \citep{qu2021boundary} and multivariate spatial settings \citep{daw2022overview,zhou2023bayesian}. However, there are number of practical issues when implementing CAGE limiting its broad use. In particular, the CAGE requires one to orthogonalize the implied basis functions in the spatial model, which requires the Cholesky decomposition of a matrix that is not numerical stable and does not have a guarantee to exist. Additionally, the original theoretical motivations for the CAGE have not been extended to the multiscale multivariate spatial setting, despite its use in these contexts.

Building on the results from \citet{bradley2017regionalization}, we provide non-trivial theoretical development of the CAGE in the multivariate spatial setting. Extending CAGE to the multivariate spatial setting requires the development of new multiscale-multivarite spatial versions of the Karhunen-Lo\'{e}ve expansion (KLE), Mercer's theorem, and the null-MAUP-theorem. These results are distinctly different from \citet{bradley2017regionalization}, which only provides KLE and Mercer's theorem extensions for the multiscale univariate spatial setting. Similarly, \citet{daw2022overview} discuss extensions for a uniscale multivariate spatial KLE and Mercer's theorem, but does not provide a null-MAUP-theorem. Providing this theory is quite important as it gives justification to the use of a MultiVariate CAGE (MVCAGE), which is a key contribution this article. MVCAGE utilizes the multivariate equivalent of the KLE \citep[see][for a review of multivariate KLE and the initial introduction to the MVCAGE]{daw2022overview}, where, motivated by our novel null-MAUP-theorem, the between scale variability of multivariate eigenfunctions are used to quantify the MAUP/ecological fallacy. Thus, given a set of areal units, MVCAGE acts as a measure of uncertainty of multiscale errors that arise in multiscale multivariate spatial data.

In addition to our new multiscale multivariate KLE, Mercer's theorem, and null-MAUP theorem, we explore the role of incorporating multivariate dependence theoretically. In particular, MVCAGE is an estimate of the between-scale-variance, and hence, a natural question to ask is whether or not incorporating multivariate dependence aids with estimating the ``true between-scale-variance.'' This is an important fourth contribution to the CAGE literature, as we provide a technical result that shows that one more precisely estimates the true between-scale-variance when leveraging multivariate spatial dependence as opposed to only leveraging spatial-only dependence.

 Spatial change of support (COS) has become a common inferential consideration in the multiscale spatial setting , which refers to the problem of performing inference (i.e., commonly spatial prediction) on a set of regions (called the target support) that differs from the spatial support the data is observed on (called the source support) \citep[e.g., see][for standard references]{waller2004applied,cressie1993statistics,cressie2015statistics,banerjee2004hierarchical}. COS allows us to consider several different target supports for prediction and estimation, and MVCAGE provides a metric to optimize to choose competing target supports. Thus, in addition to uncertainty quantification, the MVCAGE also provides a metric to use for regionalization, which is the general problem of choosing a rarget support. Univariate regionalization has been widely studied and used in geography and other disciplines to identify regions with similar values of a single variable \citep{openshaw1984modifiable, anselin1988spatial, bailey1995interactive}. In particular, given a collection of competing spatial supports, one can choose the support that minimizes MVCAGE effectively minimizing the MAUP and ecological fallacy similar to that of \citet{bradley2017regionalization} and \citet{qu2021boundary}. In this new multivariate spatial context, we consider a single optimal regionalization across all variables, as this allows one to naturally compare predictions across variables. 

The use of MVCAGE for uncertainty quantification and regionalization requires one to specify eigenfunctions that define the latent process in their statistical model.  \citet{bradley2017regionalization} consider the Obled-Creutin (OC) basis functions \citep{obled1986some} to the multiscale spatial setting, and \citet{daw2022overview} considered its use in the univariate spatial context. One of the original motivations for the development of OC basis functions is that the OC basis function can be easily adapted to several different basis function expansion models, allowing one to make use of their favorite basis set while achieving a complete orthogonal basis set. Specifically, the OC basis function starts with any given (possibly non-orthogonal) basis set, referred to as ``generating basis functions,'' and orthogonalize them in a manner that respects the Fredholm integral equations, so that the resulting OC basis functions can be interpreted as complete via the KLE. 

 The re-normalization of the generating basis functions that define the OC basis functions, can be written as the product of two terms: (1) a Cholesky matrix formed by integration of the generating basis functions, and (2) an unknown orthogonal matrix. One difficulty with this paramaeterization is that the Cholesky matrix in Item 1 does not necessarily exist for non-orthogonal basis sets. To avoid this issue, one can restrict the generating basis functions to be orthogonal (e.g., Fourier basis functions, Hermite basis functions, wavelets, etc.), so that the matrix in Item 1 is simply the identity matrix. Additionally, Item 2 can be specified to be the eigenvectors of a pre-defined covariance matrix. This special case of orthogonal generating basis function produces an OC basis set that is equivalent to the basis functions used in \citet{cho2013karhunen} and \citet{happ2018multivariate}. 
}

The manuscript is structured as follows. Section~\ref{sec2} provides preliminary details including notation, and background on the univariate and multivariate KLEs. Section~\ref{sec3} presents our novel theoretical results including a result that shows that multivariate spatial CAGE outperforms spatial-only CAGE in terms of mean squared error, the multiscale multivariate spatial extensions to the KLE, Mercer's theorem, the null-MAUP-theorem, and the construction of the MVOC basis functions. Section~\ref{sec4} showcases improvements in uncertainty quantification and along with regionalization via simulations and an example of spatial change of support. Section~\ref{sec5} discusses the advantages and limitations of our approach, and puts forward ideas for future research.

\section{Preliminary Details} \label{sec2}
{In this section, we present the mathematical notation used in this paper (Section~\ref{sec:notation}) and the background reviews of the univariate (Section~\ref{sec:univ}) and multivariate KLEs (Section~\ref{sec:smkle}).}

\subsection{Notation}\label{sec:notation}
Let $\SSS \in \mathbb{R}^d$ denote the spatial domain under study, where $d$ is often $1$, $2$, or $3$ for geospatial problems. We consider a set of $N$ univariate spatial random variables $\{Z_1(\s), \ldots, Z_N(\s) \}: \SSS \to \mathbb{R}$. Corresponding to each $Z_j(\s)$, we define the latent, noise-free, zero-mean, continuous spatial process by $Y_j(\s): \SSS \to \mathbb{R}, j = 1, \ldots, N$. We denote the mean function of $Z_j(\s)$ as $\mu_j(\s)$, and also define a set of error processes $\epsilon_j(\s): \SSS \to \mathbb{R}$ with zero-mean and finite variances for $j = 1, \ldots, N$.

In the multivariate context, we use lower-case bold font to denote the respective multivariate versions. For example, $\Z(\s) = \big(Z_1(\s), \ldots, Z_N(\s) \big)^{\top} \in \mathbb{R}^N$ denotes the multivariate spatial observations, $\Y(\s) = \big(Y_1(\s), \ldots, Y_N(\s) \big)^{\top} \in \mathbb{R}^N$ shows the corresponding vector valued latent process, and $\bm{\mu}(\s) = \big(\mu_1(\s), \ldots, \mu_N(\s) \big)^{\top} \in \mathbb{R}^N$ is the vector-valued mean function. We assume that $\Z(\s)$ is observed at $n$ locations denoted by $\bm{S} = \{\s_1, \ldots, \s_n \}$. We allow for missing observations, i.e., we do not require each $Z_j$ to be observed at every $\s_k \in \bm{S}$. The math-bold symbols (e.g., $\ZZ$, $\YY$) denote the data matrices corresponding to the bold-faced symbols. For example, the $n \times N$ matrix $\ZZ$ is the observed data, where the $j,k$-th element of $\ZZ$ is the realization of the $k$-th univariate variable $Z_k$ at location $\s_j$ (i.e., $Z_k(\s_j)$).

Next, we introduce covariance functions and covariance matrices. Given two spatial locations $\s$ and $\bm{r}$, the positive-definite covariance function of the univariate process $Y_j(\cdot)$ is denoted as $\CC_{jj}(\s, \bm{r}) = \cov\big[Y_j(\s), Y_j(\bm{r}) \big]: \SSS \times \SSS \to \mathbb{R}$. The corresponding covariance matrix is the evaluation of $\CC_{jj}$ at $\bm{S} \times \bm{S} = \{ (\s_k, \s_{\ell}) : k, \ell =1, \ldots, n \}$;  is denoted by $\C_{jj} \in \mathbb{R}^{n \times n}$, and its $k, \ell$-th element is given by $\cov\big[Y_j(\s_k), Y_j(\s_{\ell}) \big]$. The inter-variable cross-covariance functions are of a similar form given by $\CC_{ij}(\s, \bm{r}) = \cov \big[Y_i(\s), Y_j(\bm{r}) \big]$. We similarly denote the cross-covariance matrices (i.e., evaluations of $\CC_{ij}$ over $\bm{S} \times \bm{S}$) as $\C_{ij} \in \mathbb{R}^{n \times n}$. The multivariate covariance function of the multivariate spatial process $\Y$ is denoted by $\mathbb{C}(\cdot, \cdot): \SSS \times \SSS \to \mathbb{R}^{N \times N}$. We similarly denote the multivariate joint covariance matrix by $\CCC$. It is easy to see that $\CCC$ is a block matrix of order $\mathbb{R}^{nN \times nN}$ with blocks $\C_{ij} \in \mathbb{R}^{n \times n}; \, i,j = 1, \ldots, N$. Now, for the $j$-th univariate covariance function $\CC_{jj}$, we denote its $k$-th eigenfunction as $\psi_{jk}(\s) : \SSS \to \mathbb{R}$ and the corresponding eigenvalue is denoted by $\lambda_{jk}$ for $k=1, 2, \ldots, \infty$.

To consider spatial data at different resolutions, we often use the terms ``point-level'' and ``areal'' (or area-level). Point-level or point-referenced spatial data refers to random variables whose arguments are the points from $\SSS$, i.e., subsets of $\SSS$ with Lebesgue measure zero. Alternatively, areal data denotes spatial random variables that take non-zero Lebesgue measurable subsets of $\SSS$ as arguments. We use bold lowercase vectors (e.g., $\s$, $\bm{r}$) to denote the point locations and calligraphic capital letters (e.g., $\A$, $\B$, $\SSS$) to denote areal units. Area-level random variables are denoted by a superscript ``$A$''. For example, $\Z\ar(\A) = \big(Z_1\ar(\A),  \ldots, Z_N\ar(\A) \big)^{\top}$ and $\Y\ar(\A)  = \big(Y_1\ar(\A),  \ldots, Y_N\ar(\A) \big)^{\top}$ are the multivariate areal data and process vectors. Without any superscript, the random variables (e.g., $Z_j(\s)$, $\Y_k(\s)$) denote the point-level versions. Note that the area of $\SSS$ can be expressed in terms of the point-level support as $|\SSS| = \int_{\s \in \SSS} \, d\s$ and areal support as $\SSS  = \sum_{\A_j \in \widetilde{\A}} |\A_j|$ where $ \widetilde{\A} = \{ \A_1, \A_2, \ldots, \A_M:  \cup_j \A_j = \SSS, \A_i \cap \A_j = \emptyset \}$.

It is important to note that we may need to deal with multiple area-level resolutions. For example, one may want to join neighboring census block-groups to create a census tract or county. In such cases, we use $\B$ to represent the areal units at the higher resolution and $\A$ to denote the aggregated lower resolution. In this example, $Z_j^{\B}$ corresponds to a variable at the census block-group level, and $Z_j\ar$ are the created census tracts or counties.

\subsection{Review: Univariate \kl Expansion}\label{sec:univ}
For each of the latent univariate processes $Y_j$ with covariance function $\CC_{jj}$, there exists a KLE of the following form \citep{Karhunen1947, Loeve1945, cressie2015statistics}
\begin{equation} \label{ekle}
    Y_j(\s) = \sum_{k=1}^{\infty} \alpha_{jk} \psi_{jk}(\s),
\end{equation}
{in mean squared error (or $\mathcal{L}_{2}$).} Here, the expansion functions $\psi_{jk}(\s): \SSS \to \mathbb{R}$ are deterministic, spatially-varying functions, and the expansion coefficients $\alpha_{jk}$'s are mean-zero, uncorrelated random variables with respective variances $\lambda_{j1} \geqslant \lambda_{j2} \geqslant \cdots \geqslant 0$. The KLE is the consequence of Mercer's theorem \citep{aronszajn1950theory}, which {leads to the expression in (\ref{ekle}) in $\mathcal{L}_{2}$}. Together with Mercer's theorem, the KLE yields the following properties:
\begin{align*}
    \textbf{(P1:) } \qquad &\CC_{jj}(\s, \bm{r}) = \sum_{k=1} ^{\infty} \lambda_{jk} \psi_{jk}(\s) \psi_{jk}(\bm{r})\,, \\ 
    \textbf{(P2:) } \qquad  &\int_{\mathcal{S}} \psi_{jk}(\s) \psi_{j\ell}(\bm{s}) \, d\s = \delta_{k \ell} \qquad \text{(i.e., $1$ if $k = \ell$, $0$ otherwise)}\,, \\
    \textbf{(P3:) } \qquad  &\mathbb{E}[\alpha_{jk}] = 0 \,, \\
    \textbf{(P4:) }  \qquad &\cov \big[ \alpha_{jk}, \alpha_{j\ell}\big] = \lambda_{jk} \delta_{k \ell} \,, \\
    \textbf{(P5:) }  \qquad &\alpha_{jk} = \int_{\mathcal{S}} Y_{j}(\s)\psi_{jk}(\s) \, d\s \,,
\end{align*}
where $\delta_{k\ell}$ represents the Dirac delta function. The above properties demonstrate the bi-orthogonal property of the KLE, which means that the KLE uses both the uncorrelated expansion coefficients and orthonormal eigenfunctions in the expansion in \eqref{ekle}. 

The eigenvalues $\lambda_{j1}, \lambda_{j2}, \ldots$ are non-negative for any positive-definite covariance function and monotonically decrease to zero. Since $\var(\alpha_{jk}) = \lambda_{jk} \overset{k \to \infty}{\to} 0$, the random variables ($\alpha_{jk}$) also shrink to the Dirac zero distribution. Therefore, this provides the rationale behind the truncation of the infinite sum in Equation \eqref{ekle} at some finite number $M_j$. Given such a cutoff value $M_j$, the truncated KLE satisfies the optimal $\mathcal{L}_2$ approximation criterion, i.e., $\widetilde{Y}_j(\s) = \sum_{k=1}^{M_j} \alpha_{jk} \psi_{jk}(\s)$ has minimum mean-square prediction error for $Y_j(\s)$ among all linear expansions with $M_j$ terms. This optimal approximation, in addition to the bi-orthogonality criterion, justifies using the KLE as an optimal linear expansion in analyzing dependent processes. 

\begin{remark}
Note that the choice of $M_j$ is often data dependent and usually chosen by the practitioner. There are some standard techniques to find a reasonable number of eigenpairs, such as scree plots, percentage of variance explained, and pre-fixed cutoff value, etc. In practice, the final choice of $M_j$ should include a sensitivity analysis.
\end{remark}

\subsection{Review: Multivariate KLE} \label{sec:smkle}
In this section, we discuss the extension of the KLE to multivariate domains to accommodate multivariate spatial processes. This is fundamental to our multivariate CAGE criterion. To describe the multivariate KLE (MKLE), we consider a multivariate spatial process $\Y(\s): \SSS \to \mathbb{R}_N$ with mean $\bm{0}$ and finite covariance function $\mathbb{C}(\s, \bm{r})$ and from the multivariate version of Mercer's theorem {and the KLE \citep[e.g., see][]{daw2022overview} we have the following properties}:
\begin{align} \label{eqmkle} 
   { \textbf{(P6:) }}  \qquad &\mathbb{C}(\s, \bm{r}) = \sum_{k=1} ^{\infty} \lambda_{k} \ppsi_{k}(\s) \ppsi_{k}(\bm{r})^{\top},  \\ \nonumber
    {\textbf{(P7:) } } \qquad &\Y(\cdot) = \sum_{k=1}^{\infty} \alpha_k \ppsi_{k}(\cdot){,\hspace{3pt}\mathrm{in}\hspace{3pt}\mathcal{L}_{2}}\\  \nonumber
    {\textbf{(P8:) } } \qquad &\int_{\mathcal{S}} \ppsi_{k}^{\top}(\s) \ppsi_{\ell}(\bm{s}) \, d\s = \delta_{k \ell}, \\ \nonumber
    {\textbf{(P9:) } } \qquad &\mathbb{E}[\alpha_k] = 0, \\ \nonumber
   { \textbf{(P10:) }}  \qquad &\cov \big[ \alpha_k, \alpha_{\ell}\big] = \lambda_k \delta_{k \ell}. \nonumber
\end{align}
Here, the expansion coefficients $\alpha_k$-s are the stochastic component of the MKLE and the corresponding multivariate eigenfunctions $\ppsi_k$ are the deterministic expansion functions. Each $\alpha_k$ is a univariate mean-zero random variable with variance $\lambda_k$, the $k$-th eigenvalue of the multivariate covariance function $\mathbb{C}(\cdot, \cdot) \to \mathbb{R}^{N \times N}$. Note that $\mathbb{C}(\s, \mathbf{r})$ is matrix-valued, symmetric, and positive-definite with monotonically shrinking eigenvalues $\lambda_1 \geqslant \lambda_2 \geqslant \ldots \geqslant 0$ and multivariate orthonormal eigenfunctions $\ppsi_k(\s): \SSS \to \mathbb{R}^N, k = 1,2, \ldots$ of $\mathbb{C}$ as in the above. Similar to the univariate case, $\alpha_k$-s shrink to the Dirac zero distribution as $k \to \infty$.

Although the above formulation is intuitive from the univariate KLE, the challenge lies in defining the multivariate covariance functions and the corresponding vector-valued eigenfunctions $\ppsi_k$. In practice, one should be careful in maintaining the positive-definiteness of $\mathbb{C}$. The vector-valued eigenfunctions and their orthonormality constraints can also be difficult to work with. As such, it is easier to build these multivariate eigenfunctions from their process-specific univariate KLEs. We provide a brief description of this procedure next.

Proposition $5$ from \citet{happ2018multivariate} establishes the bijective relationship between a multivariate KLE and its process-specific univariate KLEs. {That is, the elements of $\Y(\cdot)$ can be represented via univariate KLEs.} Suppose we have the univariate KLEs in the form of \eqref{ekle} for all the $N$ univariate processes. For the $j$-th process, the expansion coefficients $\alpha_{jk}$ and $\alpha_{j\ell}$ are uncorrelated for $k \neq \ell$. However, for two different processes $i$ and $j$, the expansion coefficients $\alpha_{ik}$ and $\alpha_{j\ell}$ are correlated. We decorrelate the expansion coefficients in the following manner. Define $\bm{K}_{ij}$ as the $M_i \times M_j$ matrix with $k, \ell$-th element given by $K^{ij}_{k\ell} = \cov\big[ \alpha_{ik}, \alpha_{j\ell}\big]$. From the Fredholm integral equations, we have the following  \citep{cho2013karhunen}
\begin{align*}
    K^{ij}_{k\ell} = \int_{\SSS} \int_{\SSS} \CC_{ij}(\s, \bm{r}) \psi_{ik}(\s) \psi_{j\ell}(\bm{r}) \, d\s \, d\bm{r}, \qquad \text{for $k, \ell= 1, \ldots, n$.}
\end{align*}
We denote $\K$ as the block matrix with $\bm{K}_{ij}$ as the $i,j$-th block. Each $\bm{K}_{jj}$ is a diagonal matrix with the eigenvalues of the $j$-th process as the diagonal elements. Note that $\K$ is a symmetric matrix of size $\mathbb{R}^{M \times M}$, where $M = \sum_{j=1}^N M_j$. Assuming the positive definiteness of $\K$, we must compute the eigenvalues and eigenvectors of $\K$, through which we decorrelate the univariate expansion coefficients $\alpha_{jk}$-s of the different univariate processes. Proposition $5$ from \citet{happ2018multivariate} proves that the eigenvalues of $\K$ are the eigenvalues for the MKLE of $\mathbb{C}$. 

{To implement the multivariate KLE from \citep{happ2018multivariate} in practice, one can truncate the multivatiate KLE to $M$ terms as follows.} Denote the $k$-th eigenvector of $\K$ as $\bm{e}_k$, where $\bm{e}_k \in \mathbb{R}^M$. Each $\bm{e}_k$ can be written in the blocked form with blocks $\bm{e}_k^{1}, \ldots, \bm{e}^N_k$. If the $j$-th univariate truncated KLE of $\CC_{jj}$ has $M_j$ terms, the $j$-th block $\bm{e}^{j}_k$ is of size $M_j$. Then, the $j$-th element of the $k$-th {eigenvector of the} truncated MKLE is given by 
\begin{align} \label{cons1}
    \big[ \ppsi_{k} \big]_j = \big( \psi_{j1}, \, \ldots, \,  \psi_{jM_j} \big) \bm{e}_k^j.
\end{align}
\noindent Similarly, the expansion coefficients for the MKLE are then computed as
\begin{align} \label{cons2}
    \alpha_k = \sum_{j=1}^{N} \big( \alpha_{j1}, \, \ldots,  \,\alpha_{jM_j} \big) \bm{e}_k^j,
\end{align}
If the truncated KLE of $Y_j$ has $M_j$ terms, the vector-valued eigenfunctions are of dimension $M = \sum_{j=1}^N M_j$. Similar to the univariate case, the MKLE yields the minimum $\mathcal{L}_2$ error (i.e., mean-square error) among all linear expansions of $\Y$ with $M$-terms, which makes the MKLE the $\mathcal{L}_2$-optimal expansion method along with the bi-orthogonality property of uncorrelated expansion coefficients and orthonormal expansion functions. { From Proposition 6 in \citep{happ2018multivariate} this truncated MKLE converges to $\Y(\cdot)$ in $\mathcal{L}_{2}$. We refer to this specification of eigenfunctions as the Happ and Greven (HG) basis functions.}

\section{Novel Theoretical Developments for MVCAGE}
{In Section \ref{sec:mmkle} we provide the extension to the multiscale multivariate KLE and Mercer's theorem. Then, in Section~\ref{sec:MVCAGE}, we provide the null-MAUP theorem and demonstrate that MVCAGE is preferable to spatial-only CAGE. Finally we develop the MVOC basis functions in Section~\ref{klecmp}.}

\subsection{The Multiscale Multivariate Spatial Karhunen-Lo\'{e}ve Expansion and Mercer's Theorem}\label{sec:mmkle}
Using the MKLE for dependent data, we now address the issue of the spatial COS and propose the {multiscale multivariate KLE and Mercer's theorem. Consider a} point-level (multivariate) spatial process $\Y(\s): \SSS \to \mathbb{R}^N$. We define the corresponding areal-version of this variable {via COS} as
\begin{align}  \label{eqcos}
    \noindent  \Y\ar(\A) &{\equiv} \frac{1}{|\A|}\int_{\s \in \A} \Y(\s) \, d\s\\
    &  {= \left(\frac{1}{|\A|}\int_{\s \in \A} Y_{1}(\s) \, d\s,\ldots,\frac{1}{|\A|}\int_{\s \in \A} Y_{N}(\s) \, d\s\right)^{\top}}\\
    & {\equiv (Y\ar_{1}(\A),\ldots, Y\ar_{N}(\A))^{\top}},
\end{align}
\noindent where $\Y\ar(\A)$ are the continuous average of $\Y(\s)$ over an areal unit $\A$  \citep[e.g., see][]{cressie2015statistics}. {Similarly, define 
	\begin{align}  \label{eq:bfcos}
		\noindent  \ppsi\ar_{k}(\A) &{\equiv} \frac{1}{|\A|}\int_{\s \in \A} \ppsi_{k}(\s) \, d\s,
	\end{align}
	where $\ppsi_{k}$ is the HG basis function. This is a traditional formula for spatial COS applied to a multivariate spatial process. To date there is no KLE and Mercer's theorem for multiscale multivariate areal-reference spatial data. While the expressions of the multiscale multivariate KLE and Mercer's theorem are very similar to that of the multiscale univariate versions and uniscale multivariate versions, we stress that these existing theorems can not immediately be applied to the multiscale multivarite spatial setting and the extension to this setting requires careful consideration.\\

\begin{proposition}\label{acov}
	 Let $(\Omega, \mathcal{B},\mathcal{P})$ be a probability space, where $\Omega$ is a sample space, $\mathcal{H}$ is a $\sigma$-algebra on $\Omega$ and $\mathcal{P}$ is a finite Borel measure. Let $\mathcal{P}_{U}(\A)$ be the measure associated with a uniform distribution on $\A\subset \mathbb{R}^{d}$ and assume that $\mathcal{P}\times \mathcal{P}_{U}(\A)$ defines a $\sigma$-finite product measure on $\Omega\times \A$.  Let $Y_{j}$ be a zero mean spatial process defined by the mapping $Y :\mathcal{S}\times \Omega\rightarrow \mathbb{R}$ for $j = 1,\ldots, N$, such
	that $Y$ is measurable for every $\s\in \mathcal{S}$, and $ \mathcal{S}\subset\mathbb{R}^{d}$ is a topological Hausdorff space. Let the covariance function $C_{jj}(\s,\bm{r})$ be valid, continuous, and exists for all $\s, \bm{r} \in \mathcal{S}$. Let $\mathcal{L}_{2}(\Omega)$ denote the Hilbert space of real-valued square integrable random variables. Additionally assume the native Hilbert space for the multivariate covariance function $\mathbb{C}$ is separable, and the elements of $\int \mathbb{C}(\s,\s)d\s$ are bounded.
	\begin{enumerate}[label=\alph*.]
		\item \textbf{multiscale multivariate KLE:} It follows that,
		\begin{align*}
			\Y\ar(\A) &= \sum_{k=1}^{\infty} \alpha_k \ppsi\ar_{k}(\A)
		\end{align*} 
		in $\mathcal{L}_{2}(\Omega)$, where $\alpha_{k}$ are defined in (\ref{cons2}) and $\ppsi\ar_{k}$ is defined in (\ref{eq:bfcos}).
		\item \textbf{multiscale multivariate Mercer's theorem:}  It follows that,
\begin{align*}
	\cov \big[\Y\ar(\A_i), \Y\ar(\A_j) \big] = \sum_{k=1}^{\infty} \lambda_k \,\ppsi\ar_{k}(\A_i) \, \ppsi\ar_{k}(\A_j)^{\top},
\end{align*}
for $\A_{i},\A_{j} \in \mathcal{S}$.
	\end{enumerate}
\end{proposition}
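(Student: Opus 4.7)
The plan is to reduce both parts of the proposition to the finite-$M$ truncated MKLE and then pass to the limit using Fubini--Tonelli and dominated convergence, with the bounded integrability of $\int_{\mathcal{S}}\mathbb{C}(\s,\s)\,d\s$ supplying the required majorant. First I would introduce the truncated multivariate process $\Y_M(\s) := \sum_{k=1}^{M} \alpha_k \ppsi_k(\s)$, which by (P7) converges to $\Y(\s)$ in $\mathcal{L}_2(\Omega)$ for every $\s \in \mathcal{S}$, together with its aggregated counterpart $\Y^A_M(\A) := \frac{1}{|\A|}\int_\A \Y_M(\s)\,d\s$. Since the defining sum is finite, linearity of integration gives $\Y^A_M(\A) = \sum_{k=1}^{M} \alpha_k\, \ppsi\ar_k(\A)$, so part (a) reduces to showing $\Y^A_M(\A) \to \Y\ar(\A)$ in $\mathcal{L}_2(\Omega)$ as $M \to \infty$.

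For that convergence, I would apply Jensen's inequality (relative to the normalized Lebesgue measure on $\A$) followed by Fubini--Tonelli on the product space $\Omega \times \A$, yielding
\begin{align*}
\mathbb{E}\,\|\Y\ar(\A) - \Y^A_M(\A)\|^2 \;\leq\; \frac{1}{|\A|}\int_\A \mathbb{E}\,\|\Y(\s) - \Y_M(\s)\|^2\,d\s.
\end{align*}
Then I would invoke dominated convergence on the right-hand side: the integrand tends to zero pointwise in $\s$ by (P7), and is dominated by $\mathbb{E}\,\|\Y(\s)\|^2 = \trace(\mathbb{C}(\s,\s))$, which is integrable over $\A \subset \mathcal{S}$ since the entries of $\int_{\mathcal{S}}\mathbb{C}(\s,\s)\,d\s$ are bounded by hypothesis. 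This closes part (a).

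For part (b), I would exploit that $\mathcal{L}_2(\Omega)$ convergence preserves inner products, and hence componentwise covariances. At the truncated level, bilinearity of covariance together with (P10) gives
\begin{align*}
\cov\bigl[\Y^A_M(\A_i),\, \Y^A_M(\A_j)\bigr] \;=\; \sum_{k=1}^{M} \lambda_k\,\ppsi\ar_k(\A_i)\,\ppsi\ar_k(\A_j)^{\top}.
\end{align*}
Since each component of $\Y^A_M(\A_i)$ and $\Y^A_M(\A_j)$ converges in $\mathcal{L}_2(\Omega)$ to the corresponding component of $\Y\ar(\A_i)$ and $\Y\ar(\A_j)$ by part (a), the entrywise covariances converge, and letting $M\to\infty$ yields the multiscale multivariate Mercer expansion.

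The main obstacle, I expect, is not any single calculation but rather the careful justification of the two measure-theoretic interchanges in part (a): swapping the integral over $\A$ with the expectation over $\Omega$, and swapping the limit $M\to\infty$ with the integral over $\A$. This is precisely where the hypothesis that $\mathcal{P}\times \mathcal{P}_U(\A)$ is a $\sigma$-finite product measure and that $Y$ is jointly measurable must be invoked explicitly (to apply Fubini--Tonelli), and where the boundedness of the entries of $\int_{\mathcal{S}}\mathbb{C}(\s,\s)\,d\s$ supplies the $L^1(\A)$ majorant needed for dominated convergence. Once those ingredients are in place, part (b) follows by a routine continuity argument.
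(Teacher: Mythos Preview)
Your argument is correct, but it is organized differently from the paper's. The paper works component-by-component: for each $j$ it expands $\mathrm{SSE}_j(\A)=E\big[(Y_j^A(\A)-\sum_{k\le M_j}\alpha_{jk}\psi_{jk}^A(\A))^2\big]$ into three terms, evaluates the cross term by writing $\alpha_{jk}=\int_{\mathcal{S}}Y_j(\bm{u})\psi_{jk}(\bm{u})\,d\bm{u}$ and applying the Fredholm equation, and obtains $\mathrm{SSE}_j(\A)=\frac{1}{|\A|^2}\int_\A\int_\A\big[C_{jj}(\s,\bm{r})-\sum_{k\le M_j}\lambda_{jk}\psi_{jk}(\s)\psi_{jk}(\bm{r})\big]\,d\s\,d\bm{r}$, which tends to zero by the \emph{uniform} convergence in the classical univariate Mercer theorem. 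Part~(b) is handled similarly, by integrating the multivariate Mercer remainder over $\A_i\times\A_j$. Your route instead bounds the aggregated error in one stroke via Jensen, then pushes the limit through the $\A$-integral by DCT with majorant $\trace\,\mathbb{C}(\s,\s)$; for (b) you use continuity of inner products under $\mathcal{L}_2(\Omega)$ convergence rather than integrating a Mercer remainder. Your approach is shorter and uses the hypotheses (the $\sigma$-finite product measure, boundedness of $\int_{\mathcal{S}}\mathbb{C}(\s,\s)\,d\s$) exactly where they are needed; the paper's approach has the virtue of making the reduction to the univariate Karhunen--Lo\`eve/Mercer theory, and hence the role of Happ--Greven's Proposition~5, fully explicit. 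One small point worth stating in your write-up: the domination $E\|\Y(\s)-\Y_M(\s)\|^2\le E\|\Y(\s)\|^2$ holds because, by (P10), $E\|\Y(\s)-\Y_M(\s)\|^2=\sum_{k>M}\lambda_k\|\ppsi_k(\s)\|^2\le\sum_{k\ge 1}\lambda_k\|\ppsi_k(\s)\|^2=\trace\,\mathbb{C}(\s,\s)$.
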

\noindent
\textbf{Proof:} See Appendix A.\\

\noindent
The conditions for Proposition 3.1 are mostly standard conditions for traditional KLE theory \citep[e.g., see][for more details]{daw2022overview}, with the added condition of a product measure to interchange expectations with respect to $\Y$ and COS via Fubini's theorem. 

Representation theorems such as the KLE, Mercer's theorem, the measurability theorem \citep{resnick2013probability}, stick-breaking theorem \citep{sethuraman1994constructive}, and the Kolmogorov-Arnold theorem \citep[e.g., see][for a more recent discussion]{schmidt2021kolmogorov} among many others,  play a crucial role in semi-parameteric Bayesian inference. In particular, the multivariate process $\Y(\cdot)$ is allowed to follow its true generating mechanism and such theorems are known to approximate the random process arbitrarily well (i.e., in $\mathcal{L}_{2}$ in our case) with fewer concerns of model misspecification as a result. This particularly pertinent to the multivariate areal-referenced setting, which is an sub-domain in statistics that tends to adopt strong parametric assumptions such as the linear model for coregionalization \citep[LMCl][]{journel1978mining, goulard1992linear} and the multivariate Mat\'{e}rn \citep{gneiting2010matern}. Consequently, this strategy to use complete (in the sense of Proposition 3.1) areal-referenced basis functions offers an important contribution to areal-referenced-only and multiscale multivariate spatial processes.

}

\subsection{The Multivariate CAGE}\label{sec:MVCAGE}
{
The null-MAUP-theorem uses the KLE to provide insights on when there are no concerns of the MAUP or the ecological fallacy. This theorem has only been explicitly developed in the multiscale univariate setting, leading to our next contribution.

\begin{proposition} \label{prop1}
Adopt the same assumptions as in Proposition (\ref{acov}). Consider any continuous functional $\bm{f}: \mathbb{R}^{n_{A}}\times \mathbb{R}^{N} \to \mathbb{R}^K$. Let $\lambda_{k}>0$ for all $k$. Consider the multivariate spatial process at three different resolutions: a point level support $\{\bm{x}_{j}: j =1,\ldots, n_{A}\}$, $\B_{1},\ldots, \B_{n_{A}}$, $\A_{1},\ldots, \A_{n_{A}}$, such that $\bm{x}_{j}\in \B_{j}\in\A_{j}$ for all $j$. Define the $n_{A}\times N$ matrix $\textbf{Y}^{\A}$ with $j$-th row $\Y^{\A}(\A_{j})$. Similarly define $\textbf{Y}^{\B}$ and $\textbf{Y}^{\bm{x}}$.
	\begin{enumerate}[label=\alph*.]
		\item \textbf{null-ecological-fallacy-theorem:} A necessary and sufficient condition for $\bm{f}(\textbf{Y}^{\bm{x}}) \overset{a.s.}{=} \bm{f}(\textbf{Y}^{\A})$ is that $\ppsi_k(\bm{x}_{j}) =\ppsi_k\ar(\A_{j})$ for all $j$ and $k$. 
		\item \textbf{null-MAUP-theorem:} A necessary and sufficient condition for $\bm{f}(\textbf{Y}^{\B}) \overset{a.s.}{=} \bm{f}(\textbf{Y}^{\A})$ is that $\ppsi_k\ar(\B_{j}) =\ppsi_k\ar(\A_{j})$ for all $j$ and $k$. 
	\end{enumerate}
\end{proposition}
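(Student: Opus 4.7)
The plan is to leverage the multiscale multivariate KLE established in Proposition~\ref{acov}. The crucial observation is that the three representations $\Y(\bm{x}_{j})=\sum_{k}\alpha_{k}\ppsi_{k}(\bm{x}_{j})$, $\Y\ar(\B_{j})=\sum_{k}\alpha_{k}\ppsi\ar_{k}(\B_{j})$, and $\Y\ar(\A_{j})=\sum_{k}\alpha_{k}\ppsi\ar_{k}(\A_{j})$ all share the \emph{same} random coefficients $\{\alpha_{k}\}$, because the COS operations in \eqref{eqcos} and \eqref{eq:bfcos} are deterministic and act only on the basis functions. Consequently, pairwise differences such as $\Y\ar(\A_{j})-\Y\ar(\B_{j})=\sum_{k}\alpha_{k}\bigl[\ppsi\ar_{k}(\A_{j})-\ppsi\ar_{k}(\B_{j})\bigr]$ hold in $\mathcal{L}_{2}(\Omega)$, so all stochastic content is absorbed into the common $\alpha_{k}$'s and only the deterministic basis function content remains. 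This is the engine for both parts; I would carry out part (b) in detail and observe that part (a) is completely analogous after swapping the $\B$-level expansion for the point-level expansion at $\bm{x}_{j}$.

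For sufficiency, assume $\ppsi\ar_{k}(\B_{j})=\ppsi\ar_{k}(\A_{j})$ for all $j,k$. Then term-by-term the $\mathcal{L}_{2}$ expansions agree row by row, so $\textbf{Y}^{\B}\overset{a.s.}{=}\textbf{Y}^{\A}$, and continuity of $\bm{f}$ immediately promotes this to $\bm{f}(\textbf{Y}^{\B})\overset{a.s.}{=}\bm{f}(\textbf{Y}^{\A})$. This direction is essentially immediate once the common-coefficient representation is in hand.

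For necessity, the intended reading of the ``any continuous $\bm{f}$'' clause is that the a.s.\ equality must hold across the class of continuous functionals; applying it with $\bm{f}$ taken to be the identity (which is continuous and separates points) yields $\textbf{Y}^{\B}\overset{a.s.}{=}\textbf{Y}^{\A}$. Fixing a row $j$ and computing the covariance matrix of $\Y\ar(\A_{j})-\Y\ar(\B_{j})$, property (P10) gives
\begin{equation*}
\mathbf{0}\;=\;\cov\bigl(\Y\ar(\A_{j})-\Y\ar(\B_{j})\bigr)\;=\;\sum_{k=1}^{\infty}\lambda_{k}\bigl[\ppsi\ar_{k}(\A_{j})-\ppsi\ar_{k}(\B_{j})\bigr]\bigl[\ppsi\ar_{k}(\A_{j})-\ppsi\ar_{k}(\B_{j})\bigr]^{\top}.
\end{equation*}
The right-hand side is a non-negatively weighted sum of positive semi-definite rank-one matrices, and since $\lambda_{k}>0$ by hypothesis each summand must vanish, forcing $\ppsi\ar_{k}(\A_{j})=\ppsi\ar_{k}(\B_{j})$ for every $k$. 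Repeating over $j$ closes the proof of (b); part (a) runs identically with $\Y(\bm{x}_{j})=\sum_{k}\alpha_{k}\ppsi_{k}(\bm{x}_{j})$ replacing the $\B$-level expansion.

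The main obstacle is in the necessity direction, where two points demand care: (i) interpreting the ``any continuous $\bm{f}$'' hypothesis so as to extract $\textbf{Y}^{\B}\overset{a.s.}{=}\textbf{Y}^{\A}$, which I would resolve by invoking the identity functional as a legitimate member of the class; and (ii) passing from vanishing of the infinite sum of rank-one PSD matrices to vanishing of each term, which genuinely requires the strict positivity $\lambda_{k}>0$ included in the hypothesis (otherwise eigenfunctions with zero eigenvalue are unconstrained). The $\mathcal{L}_{2}$ convergence needed to legitimately manipulate the infinite expansion and interchange summation with covariance is supplied directly by Proposition~\ref{acov}.
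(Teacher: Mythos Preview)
Your argument is correct, but it proceeds along a different line from the paper's. The paper does not work directly with the multivariate KLE; instead it reduces to the univariate setting by observing that the componentwise eigenfunction equalities $\psi_{\ell k}(\bm{x}_{j})=\psi_{\ell k}\ar(\A_{j})$ (resp.\ $\psi_{\ell k}\ar(\B_{j})=\psi_{\ell k}\ar(\A_{j})$) are exactly the hypotheses of Proposition~2 in \citet{bradley2017regionalization}, invokes that result coordinate-by-coordinate to obtain $f_{\ell}(Y_{\ell}(\bm{x}_{j}))\overset{a.s.}{=}f_{\ell}(Y_{\ell}\ar(\A_{j}))$ for each $\ell$, and then assembles these into the multivariate conclusion; necessity runs the same reduction in reverse. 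Your route---common-coefficient expansion for sufficiency, then for necessity specializing $\bm{f}$ to the identity and reading off $\ppsi_{k}\ar(\A_{j})=\ppsi_{k}\ar(\B_{j})$ from the vanishing covariance $\sum_{k}\lambda_{k}[\cdot][\cdot]^{\top}=\mathbf{0}$ via strict positivity of the $\lambda_{k}$---is entirely self-contained and makes the role of the hypothesis $\lambda_{k}>0$ explicit, whereas the paper's proof is shorter but outsources the substantive step to the univariate null-MAUP theorem. Both arguments interpret the ``any continuous $\bm{f}$'' clause the same way (as quantified over the class), so there is no discrepancy there.
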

\noindent
\textbf{Proof:} See Appendix A.\\

\noindent
The null-MAUP-theorem essentially states that if the lower resolution eigenfunctions are piece-wise constant over the higher resolution $\A_{1},\ldots, \A_{n_{A}}$ then the statistic $f(\textbf{Y}^{\B})$ at the lower resolution is almost surely no different from the statistic at the higher resolution $f(\textbf{Y}^{\A})$, suggesting that there is no between-scale-differences in the conclusions based a generic statistic, and hence no MAUP (the null-ecological-fallacy theorem is interpreted in a similar way). This interpretation of the null-MAUP-theorem highlights why the MAUP and ecological fallacy occur; namely multiscale error occurs the eigenfunctions express functional variability across scales. Consequently, a natural metric to determine the presence of the MAUP/ecological fallacy is the between scale variance of the eigenfunctions; that is,
\begin{align} \label{eq:truemvcage}
	\mathbb{V}_{j}(\A,\{\lambda_{k}\},\{\ppsi_k (\cdot)\}) \equiv  \int \sum_{k = 1}^{\infty}\frac{\lambda_k (\psi_{jk}(\A) - \psi_{jk}(\s))^{2} }{\hspace{2pt}|A|}d\s.
\end{align}
which is estimated with what we call MVCAGE:
\begin{align} \label{eq:estmvcage}
		\mathrm{MVCAGE}(\A) =  \mathbb{E}\left[\sum_{j = 1}^{N}\mathbb{V}_{j}(\A,\{\lambda_{k}\},\{\ppsi_k (\cdot)\})\hspace{2pt}\big\vert \hspace{2pt}\bm{z}_{1},\ldots, \bm{z}_{N}\right],
\end{align}
where the $n$-dimensional vectors $\bm{z}_{j} = (Z_{j}(\s_{1}),\ldots, Z_{j}(\s_{n}))^{\top}$, and $\mathbb{E}$ is the expected value operator. In (\ref{eq:estmvcage}), MVCAGE is taken with respect to a posterior distribution, where we have implicitly assumed a data model, process model, and prior distributions (see Section~\ref{sec:model} for more details). There is an alternative way to express MVCAGE in terms of covariance matrices, and we provide these details in Appendix B. Additionally, the logic leading to MVCAGE via the null-MAUP-theorem does not require the use of the squared error loss, and one can easily swap in a different loss function (see Appendix C for a brief discussion). 

The univariate CAGE from \citet{bradley2017regionalization} can be written as,
\begin{equation}
	\mathrm{CAGE}_{j}(\A) = \mathbb{E}\big[\mathbb{V}_{j}(\A,\{\lambda_{k}\},\{\ppsi_k (\cdot)\})\hspace{2pt}\big\vert \hspace{2pt} \bm{z}_{j}\big]; \hspace{2pt}j = 1,\ldots, N,
\end{equation}
 is equivalent to MVCAGE in the case when $N = 1$. There is a clear benefit to incorporating/leveraging multivariate dependence. That is, suppose we are given a data model for $Z_{j}\vert Y_{j}, \bm{\theta}_{D}$, process model for $Y_{j}\vert \{\lambda_{k}\}, \{\ppsi_{k}(\cdot)\},\bm{\theta}_{P}$, and prior distributions for real-valued data specific parameters $\bm{\theta}_{D}$ and process parameters $\{\lambda_{k}\}$ and $\bm{\theta}_{P}$. Then we have the following property,
 \begin{equation}\label{eq:mvcagebetter}
 \mathbb{E}\left[\left\lbrace\sum_{j = 1}^{N}\mathbb{V}_{j}(\A,\{\lambda_{k}\},\{\ppsi_k (\cdot)\}) - \mathrm{MVCAGE}(\A)\right\rbrace^{2} \right]\le \mathbb{E}\left[\left\lbrace\sum_{j = 1}^{N}\mathbb{V}_{j}(\A,\{\lambda_{k}\},\{\ppsi_k (\cdot)\}) - \sum_{j = 1}^{N}\mathrm{CAGE}_{j}(\A)\right\rbrace^{2}\right],
 \end{equation}
 \noindent
where the expectation is taken with respect to the joint distribution of $\{\lambda_{k}\}$, $\bm{z}_{1},\ldots,\bm{z}_{N}$. This statement is easily proven, since it is well known that the posterior mean minimizes the squared error among all real-valued functions of the entire dataset $\bm{z}_{1},\ldots, \bm{z}_{N}$ \citep[e.g., see][among others]{berger2013statistical}. Equation (\ref{eq:mvcagebetter}) follows immediately. This gives the simple yet important statement that we are better able to estimate $\sum_{j = 1}^{N}\mathcal{V}_{j}(\A,\{\lambda_{k}\},\{\ppsi_{k}(\cdot)\})$ by leveraging multivariate spatial dependence than when leveraging spatial-only dependence. 
}

\subsection{The Multivariate OC Basis Function} \label{klecmp}

{The use of the truncated KLE can be restrictive as the basis functions are required to be orthogonal. However, there are a wide range of options for non-orthogonal spatial basis functions. OC basis functions provides a solution to this issue, and allow one to re-weight non-orthogonal basis functions to imply orthogonality that satisfies the Fredholm-integral equation. However, OC basis functions have primarily been used in spatial/functional settings \citep{obled1986some}. The OC basis function vector for the $j$-th spatial process can be written as,
\begin{equation*}
	\big(\psi_{j1}^{OC}(\s), \ldots, \psi^{OC}_{j\widetilde{M}_j} (\s)\big) = \big(\theta_{j1}(\s), \ldots, \theta_{j\widetilde{M}_j} (\s)\big)\textbf{F}_{j},
\end{equation*}	
\noindent 
where $\textbf{F}_{j}$ is a $\widetilde{M}_{j}\times \widetilde{M}_{j}$ real-valued matrix and let $\{\theta_{jk}\}_{k=1}^{\widetilde{M}_j}$ be non-orthogonal basis functions called generating basis functions (GBF). This leads to the following result that shows that one can extend OC basis functions to the multivariate spatial settings.

\begin{proposition} \label{prop33}
Define the matrix $\bm{W}_m$ with $b, \ell$-th element given by $w^m_{b\ell} = \int_{\mathcal{S}} \theta_{mb} (\s) \theta_{m\ell} (\s) \, d\s$ for $m = 1,\ldots, N$. Assuming positive-definiteness of $\bm{W}_m$, and let $\mbQ_m$ be defined as the inverse Cholesky decomposition of $\bm{W}_n$, i.e., $\bm{W}_m^{-1} = \bm{Q}_m\bm{Q}_m^{\top}$. Assume, $\textbf{F}_{m} = \textbf{Q}_{m}$, and define the $j$-th element of the $k$-th multivariate OC eigenvector to be $\big[ \ppsi_{k}^{OC} \big]_j = \big(\psi_{j1}^{OC}(\s), \ldots, \psi^{OC}_{j\widetilde{M}_j} (\s)\big)\bm{e}_k^j$, where $\bm{e}_k^j$ is defined above Equation~\ref{cons1}. Then $\ppsi_{k}^{OC}$ for $k = 1,\ldots,$ admits a multivariate KLE.
\end{proposition}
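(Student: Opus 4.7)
The strategy is to verify that $\{\ppsi_k^{OC}\}_{k}$ satisfies properties (P6)--(P10) of the multivariate KLE for $\Y(\cdot)$ by paralleling the \citet{happ2018multivariate} construction reviewed in Section~\ref{sec:smkle}, with the univariate OC basis $\{\psi_{jk}^{OC}\}_{k=1}^{\widetilde{M}_j}$ substituted for the process-specific eigenfunctions $\{\psi_{jk}\}_{k=1}^{M_j}$. The two central tasks are (i) to show that the OC re-weighting via $\textbf{F}_m = \bm{Q}_m$ produces an orthonormal system at the univariate level, and (ii) to propagate this orthonormality to the multivariate basis through the eigen-rotation by $\bm{e}_k$, after which every clause of the multivariate KLE is inherited from the univariate case.

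First I would prove the univariate orthonormality. Stacking $\bm{\theta}_j(\s) = (\theta_{j1}(\s), \ldots, \theta_{j\widetilde{M}_j}(\s))^{\top}$ and reading off the definition in the proposition, the OC vector satisfies $\bm{\psi}_j^{OC}(\s) = \textbf{F}_j^{\top} \bm{\theta}_j(\s) = \bm{Q}_j^{\top} \bm{\theta}_j(\s)$, so its $L^2$ Gram matrix equals $\bm{Q}_j^{\top} \bm{W}_j \bm{Q}_j$. Because $\bm{W}_j^{-1} = \bm{Q}_j \bm{Q}_j^{\top}$ is positive definite, $\bm{Q}_j$ is nonsingular, and this product telescopes to the identity; hence $\{\psi_{jk}^{OC}\}_{k=1}^{\widetilde{M}_j}$ is orthonormal in $L^2(\mathcal{S})$ for every $j$. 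This step is pure linear algebra but is precisely where the choice $\textbf{F}_m = \bm{Q}_m$ earns its keep.

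Next I would lift this to the multivariate basis. A direct computation using the block definition of $\ppsi_k^{OC}$ gives $\int \ppsi_k^{OC}(\s)^{\top} \ppsi_\ell^{OC}(\s)\, d\s = \sum_{j=1}^{N} (\bm{e}_k^j)^{\top} \bm{e}_\ell^j = \bm{e}_k^{\top} \bm{e}_\ell = \delta_{k\ell}$, which is (P8). For the remaining properties I would introduce univariate coefficients $\alpha_{jk}^{OC} = \int Y_j(\s) \psi_{jk}^{OC}(\s)\, d\s$, assemble the block covariance matrix $\bm{K}^{OC}$ with $[\bm{K}^{OC}_{ij}]_{k\ell} = \cov[\alpha_{ik}^{OC}, \alpha_{j\ell}^{OC}]$ via the Fredholm identity recalled before (\ref{cons1}), and invoke Proposition~5 of \citet{happ2018multivariate}. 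That proposition requires only that each $Y_j$ admit a representation in an orthonormal univariate basis, which step (i) now supplies; applying it verbatim yields (P9)--(P10) for the rotated coefficients $\alpha_k^{OC}$ defined as in (\ref{cons2}), the multivariate Mercer representation (P6), and the $\mathcal{L}^2$ convergence (P7) of $\sum_k \alpha_k^{OC} \ppsi_k^{OC}(\cdot)$ to $\Y(\cdot)$.

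The main obstacle is a \emph{completeness} issue rather than an algebraic one: Happ--Greven certifies $\mathcal{L}^2$ convergence only when each $Y_j$ actually lies in the closed linear span of the chosen univariate basis. Because $\bm{Q}_j$ is nonsingular, the span of $\{\psi_{jk}^{OC}\}_{k=1}^{\widetilde{M}_j}$ coincides with the span of the generating basis $\{\theta_{jk}\}_{k=1}^{\widetilde{M}_j}$, so the KLE conclusion is best interpreted for processes $\Y$ whose marginals lie in these GBF spans (or for their $L^2$-projection onto that subspace). This is the same operating assumption under which OC bases are deployed in \citet{bradley2017regionalization} and \citet{daw2022overview}, so once it is acknowledged the remainder of the argument is routine bookkeeping.
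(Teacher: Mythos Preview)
Your proposal is correct and follows essentially the same two-step architecture as the paper: establish that the univariate OC functions $\{\psi_{jk}^{OC}\}$ form an orthonormal system satisfying the requisite KLE properties, then invoke Proposition~5 of \citet{happ2018multivariate} to pass to the multivariate KLE. The only difference is that the paper dispatches the univariate step by citing Proposition~5 of \citet{bradley2017regionalization} (which guarantees orthogonality, the Fredholm integral equations, and positive eigenvalues for the OC construction), whereas you carry out the Gram-matrix computation $\bm{Q}_j^{\top}\bm{W}_j\bm{Q}_j = I$ explicitly; your added caveat about completeness of the GBF span is a point the paper leaves implicit.
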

\begin{proof}
	See Appendix A.
\end{proof}
\noindent
From Proposition~3.3, the construction of the multivariate OC (MVOC) basis functions essentially repeats the spatial-only OC construction nested within the multivariate KLE construction from \citet{happ2018multivariate}. What is particularly interesting is that we immediately obtain \citet{happ2018multivariate}'s construction when $\theta$ is orthogonal, in which case $\textbf{F}_{m}$ is equal to the identity matrix. From this perspective \citet{happ2018multivariate}'s construction can be seen as a special case of MVOC, as MVOC allows the generating basis functions $\theta_{jk}$ to be non-orthogonal, whereas \citet{happ2018multivariate} requires orthogonal bases.
}

\section{Statistical Model and Implementation} \label{sec3}
{Here we describe our model and implementation based on our novel multiscale multivariate KLE. We first discuss a key pre-processing steps in Section~\ref{sec:discrete}). Then we present the Bayesian model (Section~\ref{sec:model}) and estimation of the MKLE in Section~\ref{sec:eof}.} Finally, Section~\ref{sreg} proposes the complete algorithm for spatial regionalization.

\subsection{Discrete Approximations}\label{sec:discrete}
In practice it is not possible to integrate the eigenfunctions $\ppsi_k$ over all possible partitions since there can be an infinite number of partitions of the space. One way to proceed is to consider a Monte-Carlo approximation of the space using a grid of points \citep[e.g.,][]{daw2022supervised}. The idea is similar to methods used in Riemann integration, decision trees, and random forests. We use a set of pseudo points $\widetilde{\s}_1, \ldots, \widetilde{\s}_{\widetilde{n}} \in \SSS$ that lie at the centroid of regular grid ``boxes''. Suppose, $\{ \B_j\}_{j=1}^{\widetilde{n}}$ is such a set of rectangular grid boxes covering $\SSS$. One can use Monte-Carlo integration then to evaluate the areal eigenfunctions $\ppsi\ar(\B_j)$ over these grid boxes. Alternatively, similar to approximations used in Riemann integration, we can use sufficiently small grid boxes and assume that the MKLEs do not vary substantially inside these boxes. We can then approximate the areal eigenfunctions in the following manner:
\begin{align} \label{grideq} 
	\ppsi^{\ar}(\B_j) = \frac{1}{|\B_j|} \int_{\B_j} \ppsi(\s) \, d\s \approx \frac{1}{|\B_j|}\ppsi(\widetilde{\s}_j) \int_{\B_j}  \, d\s = \ppsi(\widetilde{\s}_j).
\end{align} 
where $\widetilde{\s}_j$ is a point from $\B_j$. We can join the grid elements in \eqref{grideq} to form the lower-resolution areal units, i.e., $\A_k = \cup_{j=1}^{n_k} \B_{k_j}$. Similar to the above derivation, the areal eigenfunction at the aggregated areal level $\A_k$ becomes the average of the eigenfunctions over the grid elements, i.e., $\ppsi^A(\A_k) = \frac{1}{n_k}\sum_{j=1}^{n_k} \ppsi^A(\B_{k_j})$. Under this approximation, the discrete approximation of MVCAGE over $\A_k$ becomes 
\begin{align} \label{dmv}
	\mathrm{DMVCAGE}(\A_k) = \frac{1}{n_k} \sum_{j=1}^{n_k} \sum_{i=1}^{\infty}  {E\left[\lambda_k \big[\ppsi_i\ar(\A_k) -  \ppsi_i\ar(\B_{k_j}) \big]^{\top} \big[\ppsi_i\ar(\A_k) -  \ppsi_i\ar(\B_{k_j}) \big]\vert \Z_{1},\ldots, \Z_{N}\right]},
\end{align}
{
	\noindent
	where the second sum stops at $M$ in practice.
}

\subsection{The Statistical Model} \label{sec:model}
Recall that $\Z(\s)$ and $\Y(\s)$ are the observed variables and underlying processes, respectively. We also denote the vector-valued mean function and error process as $\bm{\mu}(\s)$ and $\bm{\epsilon}(\s)$, respectively. For example, consider the following model
\begin{align} \label{ocmod}
	\Z(\s) &= \bm{\mu}(\s) + \Y(\s) + \bm{\epsilon}(\s),
\end{align}
where $\bm{\mu}(\s) = \bm{\mu}$ and $\bm{\epsilon}(\s) \overset{iid}{\sim} \mathcal{N}\big(\bm{0}, \diag (\sigma_1^2, \ldots, \sigma^2_N)  \big)$. Models of this form are common in many multivariate spatial applications \citep[e.g., see][]{banerjee2004hierarchical, cressie2015statistics} so the difference in models then becomes how one parameterizes the dependence structure of $\Y(\s)$. Many researchers use full-rank cross-covariance models to estimate the joint covariance matrix \citep[e.g., see the review in ][]{genton2015cross}. Alternatively, one can use a low-rank estimate of the covariance matrix using a basis function model, which was the primary idea in univariate CAGE \citep{bradley2017regionalization}. We extend this to multivariate processes and call it the multivariate Obled-Cruetin (MVOC) approach as explained below.

We consider the following modeling assumptions in our example to demonstrate the MVOC approach. For the $j$-th univariate process, we select a set of prescribed orthonormal basis functions: $\{\phi_{jk}(\s): \SSS \to \mathbb{R} \}_{k=1}^{\widetilde{M}_j}$. Using these basis functions, we consider the following modeling assumption  
\begin{align} \label{jbas}
	Y_j(\s) &= \sum_{k=1}^{\widetilde{M}_j} \phi_{jk}(\s) \nu_{jk}. 
\end{align}
\noindent We denote the orthonormal basis matrix $\bm{\Phi}_j \in \mathbb{R}^{n \times \widetilde{M}_j}$ as the $n \times \widetilde{M}_j$ matrix with $k, \ell$-th element as $\phi_{j\ell}(\s_k)$. These are the univariate Obled-Creutin (OC) bases that were used in the derivation of the univariate CAGE criterion in \citet{bradley2017regionalization}. We then define the joint OC basis as the block matrix $\bm{\varPhi}$ with $i,j$-th block as $\bm{\Phi}_j$ for $i = j$, $\bm{0}$ when $i \neq j$. Next, denote $\Z\supvec$ as the $Nn \times 1$ vector version of the $n\times N$ matrix of the observations $\Z(\s)$. Define $\Y\supvec$, $\bm{\nu}\supvec$, and $\bm{\epsilon}\supvec$ similarly. Then, in matrix form, the modeling assumptions \eqref{ocmod} and \eqref{jbas} are equivalent to 
\begin{align*}
	\Z\supvec &= \bm{\mu}\supvec + \Y\supvec+ \bm{\epsilon}\supvec, 
\end{align*}
where $\Y\supvec = \bm{\varPhi} \bm{\nu}\supvec$. Now, within a Bayesian inferential framework, we use the following prior assumptions
\begin{align*}
	\mu_j &\propto 1, \\
	\epsilon_j(\s) &\overset{iid}{\sim} \mathcal{N}(0, \sigma^2_j), \\
	\bm{\nu}_j = \big( \nu_{j1}, \ldots, \nu_{jM_j} \big) &\sim \mathcal{N} (0, g \sigma_j^2 \big(\bm{\Phi}_j^{\top}\bm{\Phi}_j\big)^{-1}), \\
	\sigma_j^2 &\sim \operatorname{Inverse} \operatorname{Gamma}(a^{\sigma}_j, b^{\sigma}_j), 
\end{align*}
\noindent where in our examples, we used $a^{\sigma}_j$ = $b^{\sigma}_j$ = $0$, which corresponds to the Jeffrey's prior $\sigma_j^2 \propto \frac{1}{\sigma^2}$. The above formulation is known as Zellner's g-prior \citep{zellner1986introduction}, which is often used in penalized regression and variable selection \citep[e.g., see a review in][]{li2021zellner}. After experimenting with many possible choices, we use $g = n$ in our examples, which is  interpreted as the prior having the equivalent weight of one observation. Note that more complicated models and prior choices can be used here, and our only goal is to estimate the joint covariance matrix $\CCC$. The posterior samples are obtained via a block Gibbs sampler, where each parameter is sampled in sequence from its full conditional distribution (refer to Appendix \ref{computationgibbs} for the details). 

\subsection{Estimating the Multivariate KLE via Empirical Orthogonal Functions}\label{sec:eof}

{We take the strategy of estimating the multivariate KLE using estimates of the covariance matrix. Note that in spatial statistics, eigenvectors computed from empirical covariance matrix estimates are referred to as empirical orthogonal functions \citep[EOFs, e.g.,][]{cressie2015statistics}}. {Here,} for cross-covariance models, we employ the linear model for co-regionalization (LMC) \citep{journel1978mining, goulard1992linear} to estimate the full covariance matrix of {$\bm{\Z}\supvec \equiv (\bm{\Z}_{1}^{\top},\ldots, \bm{\Z}_{N}^{\top})^{\top}$}.  Interested readers can see \citet{wackernagel1989overview} for a survey on co-regionalization.  Specifically, we use the co-kriging procedure from the \texttt{R} package \texttt{gstat} \citep{pebesma2015package, gstat2, rossiter2007co}. For a bivariate process, the LMC first models one of the spatial processes (say $Z_1(\s)$) using the traditional geospatial model \citep{cressie2015statistics}. Then, the second spatial process is built as a conditional model using $Z_1(\s)$, i.e., we estimate the distribution of $\pi \big[Z_2(\s) | Z_1(\s) \big]$. We use this approach to estimate the underlying parameters of the covariance matrix (cf. Section~\ref{oceanex} for an example).

To estimate the KLE from the MVOC basis model, we use the following procedure. From the posterior {Markov chain Monte Carlo (e.g., from LMC)}, {estimate the posterior covariance matrix of the multivariate spatial random process and denote it with} $\widehat{\bm{\Sigma}}$, and perform the eigendecomposition as {$\widehat{\bm{\Sigma}} = \widehat{\bm{E}} \widehat{\bm{\Lambda}} \widehat{\bm{E}}^{\top}$}. Similar to Section~\ref{sec:smkle}, denote the $k$-th column of {$\widehat{\bm{E}}$} as the vector {$\widehat{\bm{e}}_k$} and express it in the following block-vector form: {$\widehat{\bm{e}}_k^{\top} = \big(\widehat{\bm{e}}_k^{1}, \ldots, \widehat{\bm{e}}^{N}_k \big)$}. Then, {to compute MVCAGE we use the diagonal elements of $\widehat{\bm{\Lambda}}$ to compute the eigenvalues, and} the $j$-th element of the $k$-th eigenfunction and the expansion coefficients are given by 
{
	\begin{align} \label{cons12}
		\big[ \ppsi_{k}(\s) \big]_j &= \big( \phi_{j1}(\s), \, \ldots, \,  \phi_{jM_j}(\s) \big) \widehat{\bm{e}}_k^j 
	\end{align}
	\noindent
	which follows from \citep{happ2018multivariate}. {We substitute $\psi_{jk}^{OC}$ for $\phi_{jk}$ in (\ref{cons12}).}} Note that $M_j$ is always less than or equal to $\widetilde{M}_j$, and one can set $M_j = \widetilde{M}_j$. Alternatively, one can discard the trailing eigenvalues of $\mathbf{U}_j$ if their values are too small and indistinguishable from each other \citep[e.g., ][]{hastie2009elements}. {There are several alternative modeling to estimate the MKLE. We provide this discussion in Appendix~\ref{klecmp2}.}

\subsection{Spatial Regionalization} \label{sreg}
{Regionalization stats with a collection of $J$ regionalizations, $\{\A_{1}^{(1)},\ldots, \A_{n_{1}}^{(1)}\},\ldots, \{\A_{1}^{(J)},\ldots, \A_{n_{1}}^{(J)}\}$. The optimal regionalization is given by,
\begin{equation}
	\{\A_{1}^{*},\ldots, \A_{n_{1}}^{*}\} = \underset{j = 1,\ldots, J}{\mathrm{arg\hspace{2pt}min}}\left\lbrace \sum_{i = 1}^{n_{j}}\mathrm{MVCAGE}(\A_{i}^{(j)}) \right\rbrace,
\end{equation}
which we can compute} given the multivariate eigenfunctions and eigenvalues from Sections~\ref{sec:eof} (or Appendix \ref{klecmp2}). Here, we apply a similar two-stage regionalization algorithm as in univariate CAGE \citep{bradley2017regionalization}. In the first stage, we employ a spatial clustering algorithm (e.g., spatial $k$-means, HGC, MST) that finds a set of {$J$} candidate supports. In the second stage, the MVCAGE (or DMVCAGE) statistic is computed over these supports and the minimum statistic is retained. \citet{bradley2017regionalization} argues that in the context of the univariate CAGE, this methodology is more computationally efficient than a global search of minimum CAGE since the total number of possible choices is of combinatorial order in the number of partitions. This is also true for MVCAGE. We used the HGC with Ward clustering on the space of the KLE, which generated spatially contiguous partitions. An additional benefit of using HGC is that, since we need to compare MVCAGE statistic for different choices of number of clusters, we can implement parallel processing after computing the linkages (or dendrogram) once. {For additional discussion on clustering algorithms see Appendix~\ref{app:cluster}.}

Unlike the univariate CAGE, where the authors used two hyperparameters for the maximum and minimum number of areal units considered, we use one user-defined stopping criterion in our study to find an optimal choice of number of clusters. For a given number of clusters, we compute the MVCAGE statistic and keep proceeding until the relative change in MVCAGE for two consecutive clusters becomes very close. Although this choice of the stopping parameter is user-defined, one can use similar ideas as the elbow or Silhouette plot for $k$-means clustering to have a data-driven guess. The complete algorithm for regionalization with MVCAGE is given by Algorithm \ref{mvcagealg}.
\begin{algorithm}
\caption{Spatial Regionalization with MVCAGE}
\begin{algorithmic}[1]
\algblock{Input}{EndInput}
\algnotext{EndInput}
\algblock{Output}{EndOutput}
\algnotext{EndOutput}
\newcommand{\Desc}[2]{\State \makebox[2em][l]{#1}#2}
\Input \, Multivariate data $\Z(\s) = \big(Z_1(\s), \ldots, Z_N(\s) \big)^{\top}$ at locations $\bm{S} =\{\s_1, \ldots, \s_n \}$. Model specification $\mathbb{M}$. Stopping parameter $\epsilon$ for the number of regions under consideration.
\EndInput
\State Fit the model $\mathbb{M}$ to estimate the underlying processes $\Y(\s)$ and their covariance functions (Section~\ref{sec:model}).
\State {Compute the empirical orthogonal functions (Section~\ref{sec:eof} or Appendix~\ref{klecmp2}). }
\State $j=1$.
\State \textbf{\textit{while}} true \textbf{\textit{do}}

a: Use HGC on $\Y(\s_1), \ldots, \Y(\s_n)$ to find $j$ number of candidate supports.

b: Compute the MC$_j$ = MVCAGE for these candidate supports.

c: \textbf{\textit{if}} $\frac{\operatorname{MC}_{j-1} - \operatorname{MC}_j}{\operatorname{MC}_{j-1}} < \epsilon$ \textbf{\textit{break}}.

d: $j = j +1$.

\State \textbf{\textit{end while}}

\State Return the clustering output of HGC.
\end{algorithmic}
\label{mvcagealg}
\end{algorithm}

\section{Simulation and Applications} \label{sec4}
We illustrate our methodology in this section using simulated data and data from demographic and ocean color applications. We use a simulated dataset with repeated observations in Section~\ref{mvfda} and demonstrate the MVCAGE regionalization using empirical KLE. In Section~\ref{ocex}, using American Community Survey (ACS) and hospital quality data, we show the application with MVOC basis functions and Bayesian computation of the MKLE.  Then in Section~\ref{oceanex}, we demonstrate the covariance estimation using LMC and apply it to an ocean color dataset. 

\subsection{Simulation: Empirical KLE-based MVCAGE for Functional data} \label{mvfda}
We consider one-dimensional (in space) bivariate simulated data from a full bivariate Mat\'ern process \citep{guttorp2006studies, gneiting2010matern} over the one-dimensional region $[0,1]$ with the covariance functions $\mathbb{C}$ that take the following form:
\begin{align*}
    \mathcal{M}(\s, \bm{r}; \nu, a) &= \frac{2^{1- \nu}}{\Gamma(\nu)} \big( a \|\s - \bm{r}\| \big)^{\nu} \mathbb{K}_{\nu}(a\|\s - \bm{r}\|), \\
    \CC_{ii}(\s, \bm{r}) &= \sigma_i^2 \mathcal{M}(\s, \bm{r}; \nu_{ii}, a_{ii}), \\
    \CC_{ij}(\s, \bm{r}) &= \rho \sigma_i \sigma_j \mathcal{M}(\s, \bm{r}; \nu_{ij}, a_{ij}). 
\end{align*}
\noindent We use the following parameter set for the simulation: $a_1 = 10$, $a_2 = 15$, $a_{12} = 1.2 \max(a_1, a_2)$, $\nu_1 = 0.4$, $\nu_2 = 0.5$, $\nu_{12} = 0.5 (\nu_1 + \nu_2)$. We randomly select $1000$ locations over $[0,1]$ and generate $r = 4000$ replications of the bivariate spatial data. 

Since we have multiple replications here, we use the low-rank empirical MKLE (Equation \ref{empmkle}) in this example. We compute the MKLE of the joint covariance kernel using $K = 50$ sets of Fourier basis functions, i.e., $\phi_{j1}(s) = 1$, $\phi_{jk}(s) = \big(\sin(2 \pi \frac{k-1}{50}s), \cos(2 \pi \frac{k-1}{50}s)\big)$ for $k = 1, \ldots, 50$. This choice leads to a total of $101$ basis functions. We use these basis functions to estimate the expansion coefficient and then decorrelate them to compute the MKLE of the covariance kernels. Based on that, we compute the MCAGE loss function.

We next use $5,000$ pseudo-points over the domain for the clustering. We consider $10^{-4}$ as the stopping crierion. The final selection of the number of clusters is $97$. Figure~\ref{samp_fdata_cl} shows a realization of the simulated data (i.e., $Z_1(s)$ and $Z_2(s)$) and demonstrates its aggregated version (i.e., $\widehat{\Z}(s)$) after applying MVCAGE. Figure~\ref{samp_fdata_cl2} shows the quantity of MVCAGE.
\begin{figure}[ht!]
    \centering
    \includegraphics[height=6cm, width=14cm]{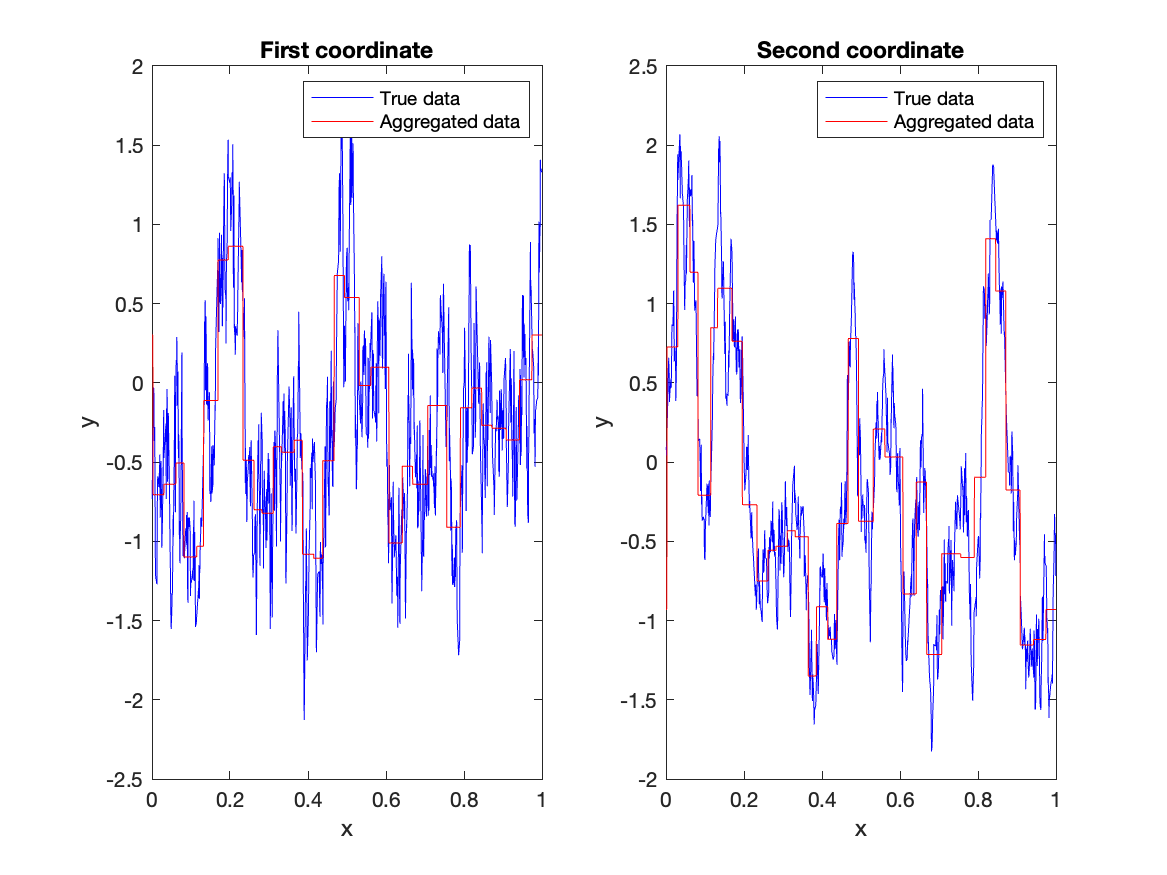}
    \caption[Regionalization with MVCAGE for Data from Bivariate Mat\'ern Process]{{\bf Regionalization with MVCAGE for data from a one-dimensional bivariate Mat\'ern process: } We applied MVCAGE to aggregate sample data from bivariate Mat\'ern Process. The blue lines correspond to the simulated data\footnote{\baselineskip=10pt \color{red} Ranadeep: ``True data" sounds awkward and is not consistent with the legend. I think that we should change this to ``Simulated data".} over $1000$ locations between [0,1]. The left panel shows the original (in blue) and aggregated data (in red) for the first process. The right panel shows the same for the second process.}
    \label{samp_fdata_cl}
\end{figure}
\begin{figure}[ht!]
    \centering
    \includegraphics[height=8cm, width=14cm]{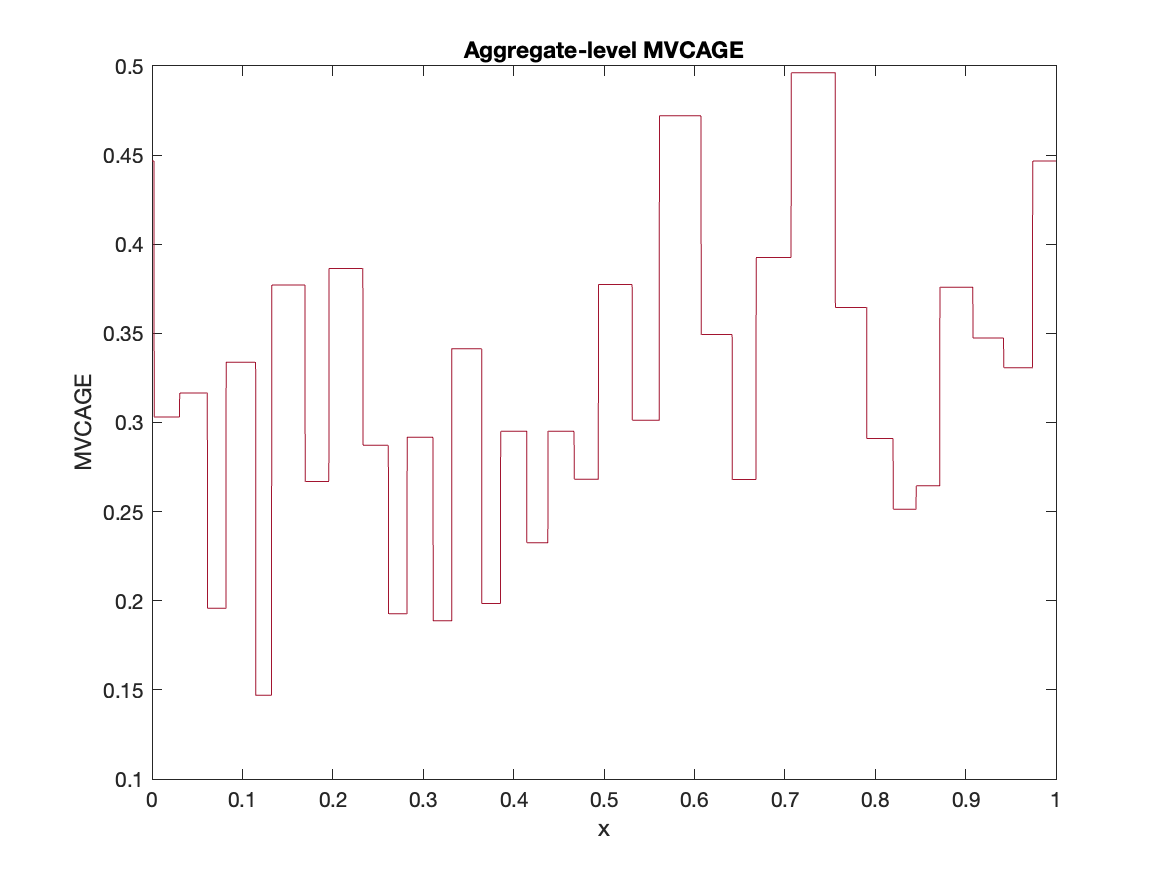}
    \caption[Errors for Aggregation of sample functional data]{ {\bf Value of MVCAGE over the areal units:} Plot of the MVCAGE statistic over the areal regions for the optimal choice of regionalization. The larger the MVCAGE value, the greater is the potential for the ecological fallacy to occur during inference.}
    \label{samp_fdata_cl2}
\end{figure}

\subsection{U.S. County-level Regionalization} \label{ocex}
We apply our methodology to regionalize the counties of the United States using a bivariate dataset consiting of log median income and hospital quality ratings. We considered the data from $3106$ counties in $2015$ that had no missing data. The log median income data consists of public-use county-level 5-year period estimates from the American Community Survey\footnote{\url{https://www.census.gov/programs-surveys/acs}} and can be easily accessed using $\texttt{R}$ package $\texttt{tidycensus}$. The hospital quality ratings are obtained through the Dartmouth Atlas Study\footnote{ \href{https://www.dartmouthatlas.org/}{https://www.dartmouthatlas.org/}}, which is a research initiative focused on healthcare analysis across the United States. We use the adjusted ratings data for the U.S. counties based on the primary care access and quality measures available at the respective hospitals. 

We used a MVOC basis approach here. To construct the spatial OC basis, we started with $300$ Gaussian basis functions as the GBF. We computed $300$ OC basis functions from these GBFs. For the regionalization algorithm, we used a stopping criterion of $0.01$ and the final number of areal units is $96$. The final constructed areal data is shown in Figure~\ref{maxminimg}. The regionalization is shown in Figure~\ref{maxminimg2}. The MVCAGE regionalization shows 97 broad contiguous spatial regions in both variables. This represents a substantial dimension reduction from the county data.

These two particular variables allow us to assess the discrepancy in areas in need (i.e., log income can be used as a proxy for an area in need) and the quality of the hospitals in that area. In the bottom-left panel of Figure~\ref{maxminimg}, we see that relatively lower log incomes tend to arise in the southeast, southwest, and northwest US, with central areas and northeast regions of the US having slightly larger log incomes. In contrast, in the bottom-right panel, the hospital quality metrics tend to be larger only in eastern US. This suggests that several regions on the West Coast and central US have an indirect relationship between log income and hospital quality. Regionalization is particularly useful for coming to these conclusions as the fine-scale features of the country-level observations obfuscate these trends.

We note that this analysis is meant as an illustration and did not directly consider the sampling error variance in the ACS survey data. In principle, such uncertainties could be included in the data model of the hierarchical framework, possibly leading to different regionalizations. The consideration of measurement and sampling error on regionalization is an interesting topic, but is beyond the scope of the current work.

\begin{figure}[ht!]
    \centering
    \begin{subfigure}[t]{0.5\textwidth}
        \centering
    \includegraphics[height=18cm, width=9cm]{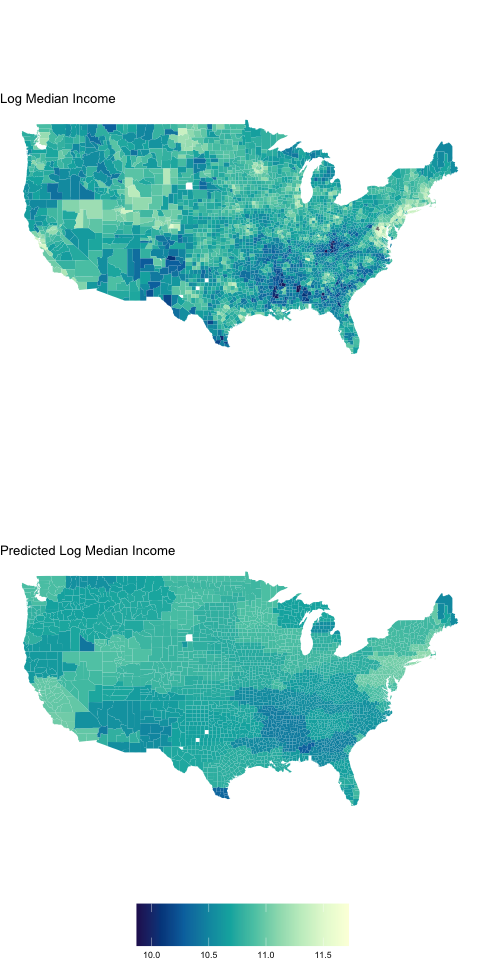}
\end{subfigure}%
    ~ 
    \begin{subfigure}[t]{0.5\textwidth}
        \centering
    \includegraphics[height=18cm, width=9cm]{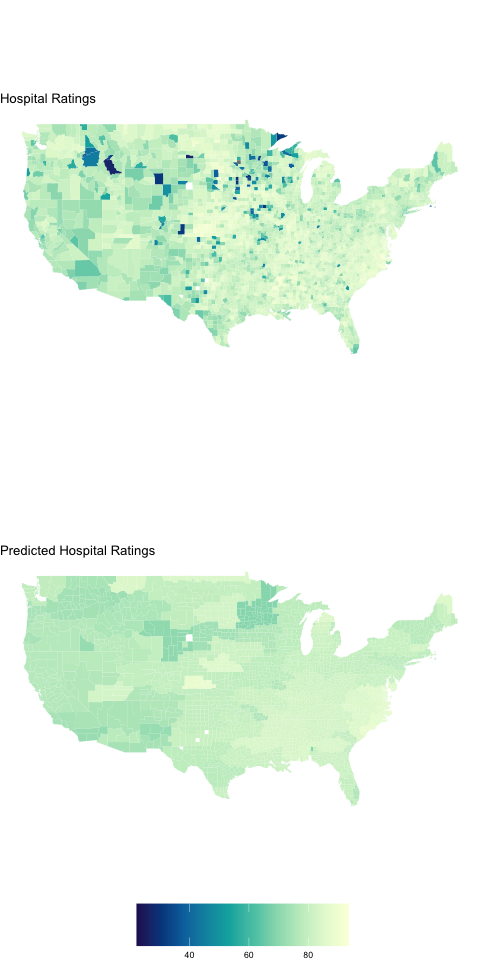}
\end{subfigure}
    \caption[Regionalization of Bivariate log median income and Hospital Quality Data over U.S. counties with MVCAGE]{{\bf Regionalization of Bivariate log median income and Hospital Quality Data over US counties with MVCAGE}: We applied our methodology to aggregate the counties over the U.S. using the ACS 5-year period estimate of log median income and the Dartmouth Atlas Study hospital quality ratings from 2015. The left column shows the county-level (top) and the area-level (bottom) aggregation of the log median income data. The right column demonstrates the same for the quality ratings.}
    \label{maxminimg}
\end{figure}

\begin{figure}[ht!]
    \centering
    \includegraphics[height=10cm, width=10cm]{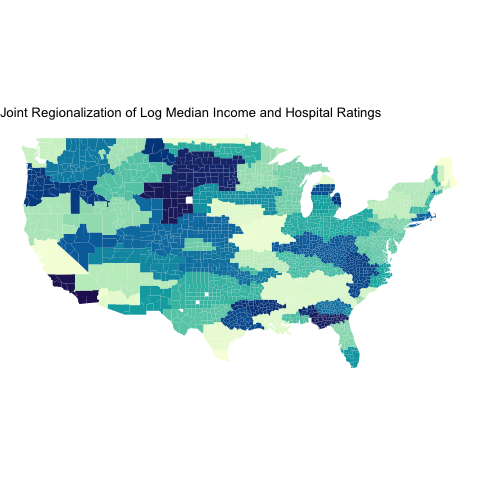}
\caption[Regionalization of Bivariate log median income and Hospital Quality Data over U.S. counties with MVCAGE: ]{{\bf Regions using MVCAGE for Bivariate Regionalization of U.S. counties with MVCAGE}: We show the regions from the bivariate regionalization of the U.S. counties the ACS 5-year period estimate of log median income and the Dartmouth Atlas Study hospital quality ratings from 2015.}
    \label{maxminimg2}
\end{figure}

\subsection{Prediction of Ocean Color} \label{oceanex}
We apply our methodology to perform a bivariate aggregation in the coastal Gulf of Alaska \citep{leeds2014emulator, wikle2013modern} based on the satellite data of ocean color and ROMS ocean model  (Regional Ocean Modeling System) \citep{doi:10.1175/2010MWR3421.1} output for the amount of chlorophyll. There are many satellites focused on the collection of ocean color datasets, e.g., SeaWiFS, MODIS, MERIS, etc. Ocean color provides information on the quantity of phytoplankton { in the water column near the ocean surface.} Therefore, it leads to the inference of the primary productivity and ecology of the upper levels of the ocean. Moreover, the measurement of chlorophyll is another important covariate correlated with species distribution and ocean ecology through its fundamental role as a food source in the lower levels of the food chain. ROMS is a three-dimensional (in space) ocean circulation model that simulates oceanic and estuarine processes at regional scale over time. The model can be used to simulate water temperature, salinity, currents, and other physical and biogeochemical variables, including chlorophyll content. Therefore, we consider a spatial aggregation or regionalization based on the joint distribution of ocean color from the SeaWiFS satellite and ROMS ocean model output for chlorophyll. { Obtaining a common reduced-dimension set of areal units can be used to facilitate data fusion as in \citet{leeds2014emulator}.
}

We consider the data from 12 May 2000, which has complete observations at $4718$ spatial coordinates. We use the LMC method from Appendix~\ref{klecmp2} here to estimate the joint covariance matrix with the bivariate Mat\'ern kernel. Then, we applied our two-stage regionalization algorithm to minimize the value of the MVCAGE statistic. Figure~\ref{ocean} shows the results of using our approach on these data. We choose the lower and upper bounds to be $250$ and $350$ and note that the regionalization with $306$ areal regions had the minimum amount of MVCAGE. Hence, this is an order-of-magnitude reduced representation of the original data in the aggregated scale such that the amount of ecological fallacy in the aggregation is minimized. 

\begin{figure}[ht!]
    \centering
    \includegraphics[height=10cm, width=16cm]{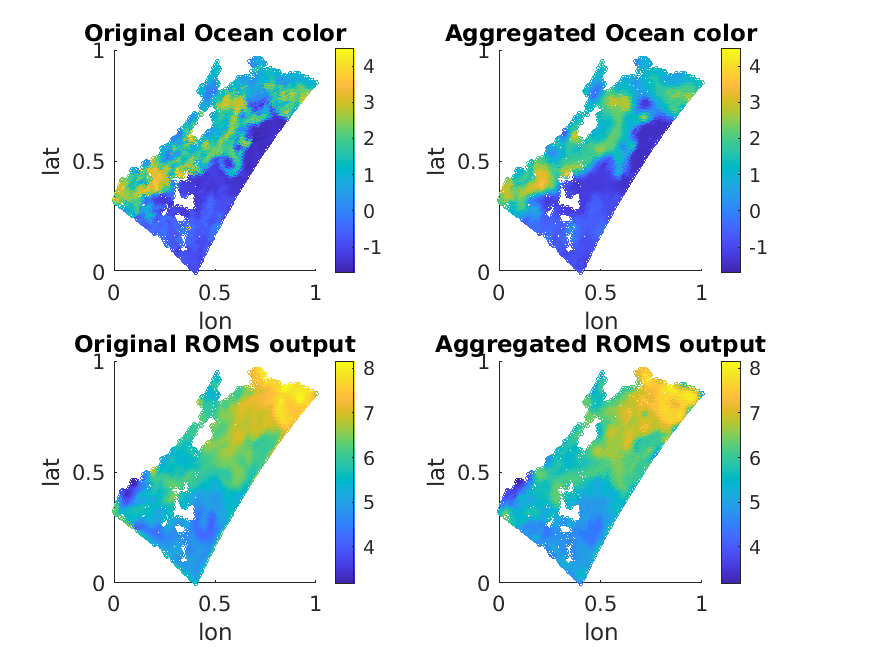}
    \caption[Regionalization of Ocean Color and ROMS Model Output]{{\bf Regionalization of Ocean Color and ROMS Model Output}: We applied our methodology to aggregate the SeaWiFS ocean color data which is an indicator of the presence of chlorophyll. The top row shows the point level (left) and the areal level (right) aggregation of the observed ocean color. The bottom row demonstrates the same for the ROMS  ocean model output for the amount of chlorophyll.}
    \label{ocean}
\end{figure}

%
\section{Discussion} \label{sec5}
In this manuscript, we {we provide theoretical justification for a multivariate CAGE (MVCAGE), which includes multiscale multivariate extensions of the KLE, Mercer's theorem, and the-null-MAUP theorem. Additionally, we provide technical results that demonstrate that the MVCAGE is preferable to spatial-only CAGE, and extend commonly used basis functions used to compute CAGE to the multivariate spatial setting. Motivated by this theoretical development we proposed a model that can be used for} multivariate regionalization of spatial data. Here we demonstrate how to build {multiscale} multivariate eigenfunctions starting from the univariate process-specific and cross-covariance matrices. Moreover, we also {extend} the Obled-Cruetin basis functions to {the multiscale multivariate spatial setting}. All the procedures are similar in the following sense. We start with any set of orthonormal basis functions and assume that the multivariate eigenfunctions lie on the span of these basis functions. We then compute the expansion coefficients and their joint covariance matrix. Decorrelating this matrix gives us the necessary eigenpairs.

The principal goal of applying the {multiscale} MKLE-based loss function is to minimize the ecological fallacy in a spatial COS procedure. Spatial COS changes the resolution of the spatial supports. It often comes with Simpson's paradox-like behavior in that the original and the aggregated data may demonstrate two different spatial patterns. Although researchers have studied the ecological fallacy extensively, there has been little development to quantify the amount of the ecological fallacy and minimize this when performing spatial aggregation. Here, we use the logic of the univariate CAGE method by \citet{bradley2017regionalization} and extend it to the multivariate domain. We demonstrate how to compute the area-level eigenfunctions, defined as the continuous average of the corresponding point-level eigenfunctions over any given areal region. We then use any loss function (such as $\mathcal{L}_1$, $\mathcal{L}_2$) to compute the distance between the point-level and areal-level eigenfunctions, with eigenvalues used as weights. The MVCAGE loss over any areal region is the continuous average of this loss function. We provide a discussion and proposition on why minimizing the MVCAGE loss leads to a lower value of the ecological fallacy in spatial aggregation.

In practice, we need to use a discrete approximation of the MVCAGE statistic using a set of gridded points and Monte Carlo approximations. If the original data is in an areal format, we can approximate the eigenfunctions within each areal region by using the Monte Carlo approximation. If the data is in a continuous point format, we can use regular grid areas as the original areal units and proceed similarly. Our goal is to find the regionalization with the minimum MVCAGE. As discussed by \citet{bradley2017regionalization}, finding the global minimum value of MVCAGE for all possible combinations of spatial units is computationally expensive. MVCAGE instead uses an alternate two-stage regionalization algorithm. The first stage of the regionalization algorithm uses the estimated values of the underlying spatial processes and applies any spatial clustering method to propose a set of candidate supports. For this, we specify lower and upper bounds on the number of possible regions or clusters. In the second stage, we compute MVCAGE for all these supports and retain the one with the minimum MVCAGE. This way, we find the particular regionalization with the minimum value of MVCAGE. We illustrated the application of the MVCAGE algorithm using a simulated data example and through an application to demographic and ocean color datasets.

There are several opportunities for future research based on the ideas proposed here. For example, extension of the MKLE to the spatio-temporal setting would provide a useful tool for investigating aggregation in both space and time. Another useful extension would be to consider different penalization on the number of regions. Lastly, an interesting direction of research would be to study the behavior of the global vs. local eigenfunctions of the corresponding MKLEs. 

\section*{Acknowledgments}
This article is released to inform interested parties of ongoing research and to encourage discussion. The views expressed on statistical issues are those of the authors and not those of the NSF or U.S. Census Bureau. This research was partially supported by the U.S. National Science Foundation (NSF) under NSF grants SES-1853096, NCSE-2215168, DMS-2310756. 
 
\begin{appendices}

%
{
\section{Proofs} \label{app2}

\noindent
\textbf{Proof of Proposition~\ref{acov}:} It follows from the Multivariate KLE theorem (e.g., see Proposition 4 of \citet{happ2018multivariate}, and the review in \citeauthor{daw2022overview}~\citeyear{daw2022overview}) that
\begin{equation}
	\Y(\cdot) = \sum_{k=1}^{\infty} \alpha_k \ppsi_{k}(\cdot){,\hspace{3pt}\mathrm{in}\hspace{3pt}\mathcal{L}_{2}}
\end{equation}
where P5 $\--$ P10 hold and each element $Y_{j}(\cdot)$ satisfies P1 $\--$ P5 via Proposition 5 of \citet{happ2018multivariate}. The claim that $	\Y\ar(\A) = \sum_{k=1}^{\infty} \alpha_k \ppsi\ar_{k}(\A)$ in $\mathcal{L}_{2}(\Omega)$ can be written as,
\begin{equation}
	\mathrm{SSE}(\A) \equiv\sum_{j=1}^{N}\mathrm{SSE}_{j}(\A)\equiv \sum_{j = 1}^{N} E\left\lbrace \left(Y_{j}\ar(\A) - \sum_{k = 1}^{M_{j}}\psi_{jk}\ar(\A)\alpha_{jk}\right)^{2}\right\rbrace
\end{equation}
\noindent
goes to zero as $M_{1},\ldots,M_{N}$ goes to infinity. Decompose $\mathrm{SSE}_{j}(\A)$ as,
\begin{equation}\label{eq:decompose}
	\mathrm{SSE}_{j}(\A) = E\left\lbrace Y_{j}\ar(\A)^{2}\right\rbrace + E\left\lbrace \left(\sum_{k = 1}^{M_{j}}\psi_{jk}\ar(\A)\alpha_{jk}\right)^{2}\right\rbrace - 2 E\left\lbrace Y_{j}\ar(\A)\sum_{k = 1}^{M_{j}}\psi_{jk}\ar(\A)\alpha_{jk}\right\rbrace,
\end{equation} 
\noindent 
where from some algebra and P10,
\begin{align}
	\nonumber
	E\left\lbrace Y_{j}\ar(\A)^{2}\right\rbrace &= \frac{1}{|A|^{2}}\int_{\A}\int_{\A} C_{jj}(\textbf{s},\textbf{r})d\textbf{s}d\textbf{r}\\
	\label{eq:firsttwoterms}
	E\left\lbrace \left(\sum_{k = 1}^{M_{j}}\psi_{jk}\ar(\A)\alpha_{jk}\right)^{2}\right\rbrace &= \frac{1}{|A|^{2}}\int_{\A}\int_{\A}\sum_{k = 1}^{M_{j}}\psi_{jk}(\textbf{s})\psi_{jk}(\textbf{r})\lambda_{jk}d\textbf{s}d\textbf{r}.
\end{align}
From Proposition 5 from \citet{happ2018multivariate} we have that the univariate processes of the multivariate KLE are univariate KLEs. This implies that the third term in (\ref{eq:decompose}) can be simplified by interpreting $\alpha_{jk}$ as a projection of $Y_{j}$ onto the eigenfunctions as follows,
\begin{align*}
E\left\lbrace Y_{j}\ar(\A)\sum_{k = 1}^{M_{j}}\psi_{jk}\ar(\A)\alpha_{jk}\right\rbrace &= 	E\left\lbrace \frac{1}{|A|^{2}}\sum_{k = 1}^{M_{j}}\int_{\A}\int_{\A}\int\psi_{jk}(\textbf{s})Y_{j}(\textbf{r})\alpha_{jk}d\textbf{s}d\textbf{r}\right\rbrace \\
&= 	E\left\lbrace \frac{1}{|A|^{2}}\sum_{k = 1}^{M_{j}}\int_{\A}\int_{\A}\int\psi_{jk}(\textbf{s})Y_{j}(\textbf{r})\int_{\Omega}Y_{j}(\textbf{u})\psi_{jk}(\textbf{u})d\textbf{u}d\textbf{s}d\textbf{r}\right\rbrace \\
&= 	E\left\lbrace \frac{1}{|A|^{2}}\sum_{k = 1}^{M_{j}}\int_{\A}\int_{\A}\int\psi_{jk}(\textbf{s})\int_{\Omega}Y_{j}(\textbf{r})Y_{j}(\textbf{u})\psi_{jk}(\textbf{u})d\textbf{u}d\textbf{s}d\textbf{r}\right\rbrace\\
&=	\frac{1}{|A|^{2}}\sum_{k = 1}^{M_{j}}\int_{\A}\int_{\A}\int\psi_{jk}(\textbf{s})\int_{\Omega}E\left\lbrace Y_{j}(\textbf{r})Y_{j}(\textbf{u})\right\rbrace\psi_{jk}(\textbf{u})d\textbf{u}d\textbf{s}d\textbf{r}\\
&=	\frac{1}{|A|^{2}}\sum_{k = 1}^{M_{j}}\int_{\A}\int_{\A}\int\psi_{jk}(\textbf{s})\int_{\Omega}{C}_{jj}(\textbf{r},\textbf{u})\psi_{jk}(\textbf{u})d\textbf{u}d\textbf{s}d\textbf{r}.
\end{align*}
From the Fredholm integral equation, this implies
\begin{align*}
	E\left\lbrace Y_{j}\ar(\A)\sum_{k = 1}^{M_{j}}\psi_{jk}\ar(\A)\alpha_{jk}\right\rbrace
= \frac{1}{|A|^{2}}\int_{\A}\int_{\A}\sum_{k = 1}^{M_{j}}\psi_{jk}(\textbf{s})\psi_{jk}(\textbf{r})\lambda_{jk}d\textbf{s}d\textbf{r}.
\end{align*}
Consequently,
\begin{equation*}
	\mathrm{SSE}_{j}(\A) = \frac{1}{|A|^{2}}\left\lbrace \int_{\A}\int_{\A}C_{jj}(\textbf{s},\textbf{r}) -\sum_{k = 1}^{M_{j}}\psi_{jk}(\textbf{s})\psi_{jk}(\textbf{r})\lambda_{jk} d\textbf{s}\textbf{r} \right\rbrace,
\end{equation*}
\noindent
so that by the univariate Mercer's theorem, which admits uniform covergence, we have that $\mathrm{SSE}_{j}(\A)$ coverges to zero. For fixed $N$ this implies that $\mathrm{SSE}(\A)$ converges to zero as $M_{1},\ldots, M_{N}$ each goes to infinity. This proves the multiscale multivariate KLE stated in Proposition 3.1.a. The proof of Statement 3.1.b follows immediately, since the right-hand-side of,
\begin{equation}
		\cov \big[\Y\ar(\A_i), \Y\ar(\A_j) \big] - \sum_{k=1}^{M} \lambda_k \,\ppsi\ar_{k}(\A_i) \, \ppsi\ar_{k}(\A_j)^{\top}= \frac{1}{|A|^{2}}\left\lbrace \int_{\A} \int_{\A} \mathbb{C}(\textbf{s},\textbf{r}) - \sum_{k=1}^{M} \lambda_k \,\ppsi_{k}(\textbf{s}) \, \ppsi_{k}(\textbf{r})^{\top}d\textbf{s}d\textbf{r}\right\rbrace,
\end{equation}
\noindent
converges to zero by the multivariate Mercer's theorem  \citep[Proposition 4 of][]{happ2018multivariate}.\\

\noindent
\textbf{Proof of Proposition \ref{prop1}:} Suppose $\psi_{\ell k}(\bm{x}_{j}) =\psi_{\ell k}\ar(\A_{j})$ for $\ell = 1,\ldots, N$, $j = 1,\ldots, n_{A}$, and $k = 1,\ldots$. For a given $\ell$ and $j$, it follows from Proposition 2 in \citet{bradley2017regionalization} that $f_{\ell}(Y_{\ell}(\bm{x}_{j})) \overset{a.s.}{=}f_{\ell}(Y_{\ell}\ar(\A_{j}))$ for any real-valued function $f_{\ell}$. Let $\bm{f}(\cdot) = (f_{1}(\cdot),\ldots, f_{N}(\cdot))$, so that $\bm{f}(\textbf{Y}^{x}) \overset{a.s.}{=} \bm{f}(\textbf{Y}^{\A})$. Similarly, $\bm{f}(\textbf{Y}^{x}) \overset{a.s.}{=} \bm{f}(\textbf{Y}^{\A})$ implies $f_{\ell}(Y_{\ell}(\bm{x}_{j})) \overset{a.s.}{=}f_{\ell}(Y_{\ell}\ar(\A_{j}))$ and from Proposition 2 of \citet{bradley2017regionalization} $\psi_{\ell k}(\bm{x}_{j}) =\psi_{\ell k}\ar(\A_{j})$ for $\ell = 1,\ldots, N$, $j = 1,\ldots, n_{A}$, and $k = 1,\ldots$, so that $\ppsi_{k}(\textbf{x}_{j}) = \ppsi_{j}\ar(\A_{j})$. This proves Proposition 3.2.a. The proof of Proposition 3.2.b is the same, but replaces $\textbf{x}_{j}$ with $\B_{j}$.\\

\noindent
\textbf{Proof of Proposition \ref{prop33}:} It follows from Proposition 5 of \citet{bradley2017regionalization} that
\begin{equation*}
	\big(\psi_{j1}^{OC}(\s), \ldots, \psi^{OC}_{j\widetilde{M}_j} (\s)\big) = \big(\theta_{j1}(\s), \ldots, \theta_{j\widetilde{M}_j} (\s)\big)\textbf{F}_{j},
\end{equation*}	
satisfies orthogonality, the Fredholm integral equations, and the associated truncated KLE has positive eigenvalues. Thus, it follows from Proposition 5 of \citet{happ2018multivariate} that the implied multivariate process is a multivariate KLE.
}

\section{Alternate Expressions of MVCAGE}
Recall that, 
\begin{align*}
    \cov \big(\Y(\s), \Y(\s)\big) &= \sum_{k=1}^{\infty} \lambda_k \,\ppsi_k(\s) \, \ppsi_k(\s)^{\top}, \\
    \cov \big(\Y\ar(\A), \Y\ar(\A)\big) &= \sum_{k=1}^{\infty} \lambda_k \,\ppsi_k\ar(\A) \, \ppsi_k\ar(\A)^{\top}.
\end{align*} 
Similarly, {it follows from the multivariate Mercer's theorem that we have} the following
\begin{align*}
     \cov\big(\Y(\s), \Y\ar(\A) \big) &= \frac{1}{|\A|} \int_{\bm{r} \in \A}   \cov \big(\Y(\s), \Y(\bm{r}) \big) \, d\bm{r} \\
     &= \frac{1}{|\A|} \int_{\bm{r} \in \A} \sum_{k=1}^{\infty} \lambda_k \ppsi_k(\s) \ppsi_k^{\top}(\bm{r}) \, d\bm{r} \\
     &= \sum_{k=1}^{\infty} \lambda_k \ppsi_k(\s) \frac{1}{|\A|} \int_{\bm{r} \in \A}\ppsi_k^{\top}(\bm{r}) \, d\bm{r} \\
     &= \sum_{k=1}^{\infty} \lambda_k \ppsi_k(\s) \ppsi_k\ar(\bm{r})^{\top}. \\
\end{align*}
Now, 
\begin{eqnarray*}
\sum_{k=1}^{\infty} \lambda_k \ppsi_k^{\top}(\s) \ppsi_k(\s)
&=& \trace\left(\sum_{k=1}^{\infty} \lambda_k{\ppsi_k^{\top}(\s) \ppsi_k(\s)} \right)\\
&=& \trace\left( \sum_{k=1}^{\infty} \lambda_k {\ppsi_k(\s) \ppsi_k(\s)^{\top}} \right)\\
& = &\trace\big( \cov(\Y(s)).     
\end{eqnarray*}
Similarly,   $\sum_{k=1}^{\infty} \lambda_k \ppsi_k\ar(\A)^{\top} \ppsi_k\ar(\A)$ = $\trace\big( \cov(\Y\ar(\A) \big)$. Then, under the squared-error loss, the MVCAGE loss becomes 
{
\begin{align*}
    MVCAGE(\A) &=  \mathbb{E}\left[\frac{1}{|\A|} \int_{\s \in \A} \sum_{k=1}^{\infty} \lambda_k \Big[\ppsi_k\ar(\A) - \ppsi_k(\s) \Big]^{\top} \Big[\ppsi_k\ar(\A) - \ppsi_k(\s) \Big]  \, d\s\vert \Z_{1},\ldots, \Z_{N}\right]\\
    &= \mathbb{E}\left[\sum_{k=1}^{\infty} \lambda_k \frac{1}{|\A|} \int_{\s \in \A} \Big[ \ppsi_k^{\top}(\s) \ppsi_k(\s) - 2 \ppsi_k^{\top}(\s) \ppsi_k\ar(\A) + \ppsi_k\ar(\A)^{\top} \ppsi_k\ar(\A) \Big]  \, d\s\vert \Z_{1},\ldots, \Z_{N}\right]\\
    &= \mathbb{E}\left[\frac{1}{|\A|} \int_{\s \in \A}  \trace\Big[ \cov \big(\Y(\s), \Y(\s) \big) - 2\cov \big(\Y(\s), \Y\ar(\A) \big) + \cov \big(\Y\ar(\A), \Y\ar(\A) \big) \Big]  \, d\s\vert \Z_{1},\ldots, \Z_{N}\right]\\
    &= \mathbb{E}\left[\frac{1}{|\A|} \int_{\s \in \A}  \trace\Big[ \cov\big( \Y(\s) - \Y\ar(\A), \Y(\s) - \Y\ar(\A)\big)  \Big]  \, d\s\vert \Z_{1},\ldots, \Z_{N}\right] \\
    &= \mathbb{E}\left[\frac{1}{|\A|} \int_{\s \in \A}  \trace \, \mathbb{E}\Big[\Big( \Y(\s) - \Y\ar(\A) \Big) \Big(\Y(\s) - \Y\ar(\A)\Big)^{\top} \Big] \, d\s\vert \Z_{1},\ldots, \Z_{N}\right].
\end{align*}
}
The other part follows straightforwardly by noting that $\cov \big(\Y(\s), \Y\ar(\A) \big) = \frac{1}{|\A|} \int_{\bm{r} \in \A}\cov \big(\Y(\s), \Y(\bm{r} \big) \, d\bm{r}$.

{
	This leads to the following ANOVA type decomposition of the multivariate CAGE:
	\begin{equation}
		MVCAGE(\A) = \mathbb{E}\left[\frac{1}{|\A|} \int_{\s \in \A}\trace \Big[ \cov \{\Y(\s) \}\Big]d\s -  \trace \Big[ \cov  \{\Y\ar(\A) \}\Big]\vert \Z_{1},\ldots, \Z_{N}\right].  
	\end{equation}
Thus, the first term represents the average variability of the multivariate spatial process on the point-referenced scale, whereas the second term represents the variability of the multivariate spatial process on the areal/regional scale.
}

\section{The General MVCAGE}
Although we used the weighted square error loss, we can generalize the above notion by using any convex loss function for multivariate data, e.g., Manhattan loss, KL divergence, sup-norm loss etc. Under a loss function $\mathcal{L}$, the generalized version of MVCAGE (GMVCAGE) can be expressed as
\begin{align*}
	{\mathrm{GMVCAGE}(\A) = \frac{1}{|\A|} \int_{\s \in \A} E\left[\sum_{k=1}^{\infty} \mathcal{L} \big[\ppsi_k(\A) , \ppsi_k(\s); \lambda_k \big]\vert \Z_{1},\ldots, \Z_{N}\right] \, d\s,}
\end{align*}
One can choose any problem-specific loss function here; e.g., \citet{daw2022supervised} used $\mathcal{L}_1$ divergence within the framework of a minimum spanning tree and then applied the $\mathcal{L}_2$ loss to get the final partition for a univariate CAGE example. Specifically, the $\mathcal{L}_1$ loss is often advantageous for outliers and robustness. Univariate CAGE in \citet{bradley2017regionalization} used a weighted squared error loss, with the $k$-th eigenfunction weighted by the eigenvalue $\lambda_k$.

\section{Block Gibbs Sampling for the MVOC Model} \label{computationgibbs}
%
%
Denote $\widehat{\bm{\nu}}_j$ as the ordinary least-square estimate with $Z_j(\s)$ as response and $\phi_{jk}(\s)$ as predictor. Denote its $k$-th element as $\widehat{\nu}_{jk}$. Also denote the $g$-sum of square as $\operatorname{SSR}_j = \sum_{i=1}^n \big[Z_j(\s_i) - \mu_j - \sum_{k=1}^{M_j}\frac{g}{g+1}\phi_{jk}(\s_i) \widehat{\nu}_{jk} \big]^2$. Then, one can draw posterior samples of the hyperparameters using a block Gibbs sampler, where the hyperparameters are sequentially sampled from each of their specific full conditional distributions as follows
\begin{eqnarray*}
    \pi\big(\mu_j | \cdot \big) &\sim& \mathcal{N} \bigg(\frac{1}{n} \sum_{k=1}^n \big[Z_j(\s_i) - \sum_{k=1}^{M_j}\frac{g}{g+1}\phi_{jk}(\s_i) \widehat{\nu}_{jk} \big], \frac{\sigma_j^2}{n}\bigg), \\
    \pi\big(\sigma_j^2 | \cdot \big) &\sim& \operatorname{Inverse} \operatorname{Gamma} \bigg(\frac{n}{2}, \, \frac{\operatorname{SSR}_j}{2} \bigg), \\
    \pi\big(\bm{\nu}_j | \cdot \big) &\sim& \mathcal{N} \Big( \frac{g}{g+1} \widehat{\bm{\nu}}_j, \frac{g}{g+1} \sigma_j^2 \big(\bm{\Phi}_{j}^{\top} \bm{\Phi}_{j} \big)^{-1} \Big).
\end{eqnarray*}
\noindent One first samples from the full conditional distribution $\pi\big(\sigma_j^2 | \YY \big)$. Then, conditioned on the sampled $\sigma^2_j$, we can draw samples of $\mu_j$ and $\bm{\nu}_j$ from their conditional distributions. Then, using Monte Carlo integration, we can integrate out $\mu_j$ and $\sigma^2_j$ to get $\pi(\bm{\nu}_j | \YY )$.

\section{Alternative Modeling and MKLE Computation} \label{klecmp2}
We also demonstrate a few examples using alternative covariance modeling techniques. In particular, for cross-covariance models, we employ the linear model for co-regionalization (LMC) \citep{journel1978mining, goulard1992linear} to estimate the full covariance matrix of $\bm{\Z}\supvec$.  Interested readers can see \citet{wackernagel1989overview} for a survey on co-regionalization.  Specifically, we use the co-kriging procedure from the \texttt{R} package \texttt{gstat} \citep{pebesma2015package, rossiter2007co}. For a bivariate process, the LMC first models one of the spatial processes (say $Z_1(\s)$) using the traditional geospatial model \citep{cressie2015statistics}. Then, the second spatial process is built as a conditional model using $Z_1(\s)$, i.e., we estimate the distribution of $\pi \big[Z_2(\s) | Z_1(\s) \big]$. We use this approach to estimate the underlying parameters of the covariance matrix (cf. Section~\ref{oceanex} for an example).

We consider another example of empirical MKLE with the assumption of repeated data. When sufficient multiple replications of the multivariate variables are available, i.e., more samples than the number of parameters are available, one can compute the empirical covariance matrix of the data as 
\begin{align} \label{empmkle}
	\widehat{\CC}_{ij}(\s, \bm{r}) = \frac{1}{\widetilde{n}} \sum_{k=1}^{\widetilde{n}} \big[Z_{i, k}(\s) - \bar{Z}_{i}(\s)\big] \big[Z_{j, k}(\bm{r}) - \bar{Z}_{j}(\bm{r})\big]^{\top},    
\end{align}
\noindent where  $Z_{j, k}$ is the $k$-th replication of the $j$-th process and $\bar{Z}_{j}(\s) = \frac{1}{\widetilde{n}} \sum_{k=1}^{\widetilde{n}} Z_{j, k}(\s)$ (cf. Section~\ref{mvfda} for an example). 

If we first estimate the form covariance functions $\widehat{\CC}_{jk}$, we need to compute the MKLE from them. We use the Galerkin approach \citep{ghanem1991stochastic} here, which is a similar routine as used in the MVOC basis construction. To compute the KLE for the $j$-th univariate process, we start with a set of orthonormal basis functions  $\{\phi_{jk}(\s)\}_{k=1}^{\widetilde{M}_j}$. Such families of basis functions include Fourier, Haar, spline, polynomial basis functions, and many others. Given the estimate $\widehat{\CC}_{jj}(\s, \bm{r})$, one can compute the matrix $\mathbf{U}_j$ with elements $U^j_{k \ell}$ as 
\begin{align*}
	U^j_{k \ell} = \int_{s} \int_{\bm{r}} \widehat{\CC}_{jj}(\s, \bm{r}) \phi_{jk}(\s) \phi_{j\ell}(\bm{r}) \, d\s \, d\bm{r}.
\end{align*}
Assuming the positive definiteness of $\mathbf{U}_j$, we can perform an eigendecomposition on $\mathbf{U}_j$ to get eigenvalues $\lambda_{j1} \geqslant \cdots \geqslant \lambda_{jM_j} $ and eigenvectors $\bm{f}_{j1}, \ldots, \bm{f}_{jM_j}$. Then, the $k$-th eigenfunction of $\widehat{\CC}_{jj}$ is given by
\begin{align*}
	\psi_{jk}(\s) = \big(\bm{\phi}_{j1}(\s), \ldots,  \bm{\phi}_{j\widetilde{M}_j}(\s)\big)^{\top} \bm{f}_{jk}.
\end{align*}
Proof of the correctness of this derivation is similar to the ones in Appendix~\ref{app2}.

\begin{remark}
	Note that if the number of replications are less than the rank of the covariance matrix, one can use a similar technique as in the MVOC approach to approximate the first few eigenpairs. In that case, using the OC basis, we assume the following basis expansion model
	\begin{align*}
		Z_{j, k}(\s) = \mu_j + \sum_{i=1}^{\widetilde{M}_j} \phi_{ji}(\s) \nu^k_{ji} + \epsilon_j(\s),
	\end{align*}
	where, like before, $Z_{j, k}(\s)$ is the $k$-th replication of the $j$-th process, and $\nu^k_{ji}$ is the $i$-th basis expansion for the $k$-th replication of the $j$-th process. Then, we can compute the covariance matrix $\bm{K}_{jm}$ between the expansion coefficients with its $k,\ell$-th element given by $\frac{1}{\widetilde{n}} \sum_{i=1}^{\widetilde{n}} \big( \nu^i_{jk} - \bar{\nu}_{jk} \big) \big( \nu^i_{m\ell} - \bar{\nu}_{m\ell}\big)^{\top}$, where $\bar{\nu}_{jk} = \frac{1}{\widetilde{n}} \sum_{i=1}^{\widetilde{n}} \nu^i_{jk}$. Following a similar procedure as in the MVOC basis, we decorrelate the block matrix $\K$ with blocks $\bm{K}_{jk}$ to find the multivariate eigenfunctions and eigenvalues.
\end{remark}

\section{Clustering Considerations}\label{app:cluster}
The primary tool behind spatial regionalization is spatial clustering. Unlike nonspatial clustering, spatial clustering needs to account for the spatial proximity of the units to identify spatially contiguous clusters. Even then, many spatial clustering algorithms cannot guarantee explicit spatial contiguity and may require additional ad-hoc techniques \citep[see][for a review]{duque2007supervised}. In the context of MVCAGE, we explored some of these choices, namely spatial $k$-means \citep{alexandrov2011efficient}, hierarchical geographical clustering \citep{guha1998cure, chavent2018clustgeo}, and minimum spanning trees \citep{assunccao2006efficient}. 

We first discuss spatial $k$-means and hierarchical geographical clustering (HGC) approaches, which are straightforward extensions of their respective nonspatial  versions. Suppose we are interested in clustering a random vector $\Y(\cdot)$ using spatial $k$-means and HGC. We first create a new variable as $\bm{\mathcal{Y}}(\s) = \big(\gamma \Y(\s), (1 - \gamma)\s \big)$, where the hyperparameter $\gamma$ balances the ratio of effects of $\Y(\s)$ and the spatial proximity. We apply the respective clustering algorithms (i.e., nonspatial $k$-means and hierarchical clustering) on $\bm{\mathcal{Y}}$ to get the spatial clusters. Finally, units within the same cluster are joined to create a partition or regionalization of $\SSS$. In practice, explicitly local partitions are often obtained in an ad-hoc fashion. For example, one can carefully tune the balancing parameter $\gamma$ such that only local clusters are formed. Alternatively, one can create naive partitions using the nonspatial clustering methods and then subsequently partition the clusters with non-neighboring locations into multiple smaller clusters. Although these methods come with additional concerns about these ad-hoc techniques, they are quite commonly used due to their simple interpretations and available implementations \citep[such as ][]{chavent2018clustgeo}.

Alternatively, the minimum spanning tree (MST)  \citep{assunccao2006efficient, teixeira2019bayesian, luo2021bayesian} is a method that yields an explicitly spatial regionalization, i.e., it only allows neighboring units in a single cluster. We first use the new response variables $\mathcal{Y}(\s)$ to form a connected graph over the spatial domain $\SSS$. This represents 
the spatial neighborhood as a connectivity graph, where only spatial neighbors are connected with each other with unique ``edges''. An edge weight is used to reflect the variable under study that is being partitioned. Subsequently, edges with higher weights are removed to yield clusters, within which units share at least one neighboring unit by construction. See \citet{daw2022supervised} for a univariate spatial regionalization methodology using MST, where the CAGE loss was incorporated within the framework of the MST. 

Here we use a hybrid idea of the spatial clustering techniques, borrowing the HGC method from \citet{bradley2017regionalization} and the clustering technique in \citet{daw2022overview}. The MST methodology in \citet{daw2022supervised} is applied on the space of the KLE, which by construction generate explicitly local clusters. However, we found in our studies that the naive clustering algorithms, such as $k$-means and HGC with Ward linkage, work reasonably well to produce spatially contiguous partitions. Additionally, the HGC with Ward linkage mitigates the computational and boundary issues of the MST, which had to be mitigated in \citet{daw2022overview} by imposing additional constraints. We also note that MST is a specific case of HGC with single linkage method. Since Ward linkage has been argued to be advantageous and a reasonable choice for agglomerative clustering, here we chose the HGC with Ward clustering to generate the spatial partitions.

\end{appendices}
\bibliographystyle{elsarticle-num-names} 
\bibliography{mvcage}

\begin{thebibliography}{49}
\expandafter\ifx\csname natexlab\endcsname\relax\def\natexlab#1{#1}\fi
\providecommand{\url}[1]{\texttt{#1}}
\providecommand{\href}[2]{#2}
\providecommand{\path}[1]{#1}
\providecommand{\DOIprefix}{doi:}
\providecommand{\ArXivprefix}{arXiv:}
\providecommand{\URLprefix}{URL: }
\providecommand{\Pubmedprefix}{pmid:}
\providecommand{\doi}[1]{\href{http://dx.doi.org/#1}{\path{#1}}}
\providecommand{\Pubmed}[1]{\href{pmid:#1}{\path{#1}}}
\providecommand{\bibinfo}[2]{#2}
\ifx\xfnm\relax \def\xfnm[#1]{\unskip,\space#1}\fi
\bibitem[{Openshaw and Taylor(1979)}]{openshaw1979million}
\bibinfo{author}{S.~Openshaw}, \bibinfo{author}{P.~Taylor},
\newblock \bibinfo{title}{A million or so correlation coefficients: Three experiments on the modifiable areal unit problem},
\newblock \bibinfo{journal}{Statistical applications in the spatial sciences} \bibinfo{volume}{29} (\bibinfo{year}{1979}) \bibinfo{pages}{127--144}.
\bibitem[{Gotway and Young(2002)}]{gotway2002combining}
\bibinfo{author}{C.~A. Gotway}, \bibinfo{author}{L.~J. Young},
\newblock \bibinfo{title}{Combining incompatible spatial data},
\newblock \bibinfo{journal}{Journal of the American Statistical Association} \bibinfo{volume}{97} (\bibinfo{year}{2002}) \bibinfo{pages}{632--648}.
\bibitem[{Banerjee et~al.(2004)Banerjee, Carlin, and Gelfand}]{banerjee2004hierarchical}
\bibinfo{author}{S.~Banerjee}, \bibinfo{author}{B.~P. Carlin}, \bibinfo{author}{A.~E. Gelfand}, \bibinfo{title}{Hierarchical modeling and analysis for spatial data}, \bibinfo{publisher}{CRC press}, \bibinfo{year}{2004}.
\bibitem[{Wikle and Berliner(2005)}]{wikle2005combining}
\bibinfo{author}{C.~K. Wikle}, \bibinfo{author}{L.~M. Berliner},
\newblock \bibinfo{title}{Combining information across spatial scales},
\newblock \bibinfo{journal}{Technometrics} \bibinfo{volume}{47} (\bibinfo{year}{2005}) \bibinfo{pages}{80--91}.
\bibitem[{Robinson(2009)}]{robinson2009ecological}
\bibinfo{author}{W.~S. Robinson},
\newblock \bibinfo{title}{Ecological correlations and the behavior of individuals},
\newblock \bibinfo{journal}{International Journal of Epidemiology} \bibinfo{volume}{38} (\bibinfo{year}{2009}) \bibinfo{pages}{337--341}.
\bibitem[{Bradley et~al.(2017)Bradley, Wikle, and Holan}]{bradley2017regionalization}
\bibinfo{author}{J.~R. Bradley}, \bibinfo{author}{C.~K. Wikle}, \bibinfo{author}{S.~H. Holan},
\newblock \bibinfo{title}{Regionalization of multiscale spatial processes by using a criterion for spatial aggregation error},
\newblock \bibinfo{journal}{Journal of the Royal Statistical Society: Series B (Statistical Methodology)} \bibinfo{volume}{79} (\bibinfo{year}{2017}) \bibinfo{pages}{815--832}.
\bibitem[{Zhou and Bradley(2023)}]{zhou2023bayesian}
\bibinfo{author}{S.~Zhou}, \bibinfo{author}{J.~R. Bradley},
\newblock \bibinfo{title}{Bayesian hierarchical modeling for bivariate multiscale spatial data with application to blood test monitoring},
\newblock \bibinfo{journal}{arXiv preprint arXiv:2310.13580}  (\bibinfo{year}{2023}).
\bibitem[{Cressie(1993)}]{cressie1993statistics}
\bibinfo{author}{N.~Cressie}, \bibinfo{title}{Statistics for spatial data}, \bibinfo{publisher}{John Wiley \& Sons}, \bibinfo{year}{1993}.
\bibitem[{Cressie and Wikle(2011)}]{cressie2015statistics}
\bibinfo{author}{N.~Cressie}, \bibinfo{author}{C.~K. Wikle}, \bibinfo{title}{Statistics for Spatio-temporal Data}, \bibinfo{publisher}{John Wiley \& Sons}, \bibinfo{year}{2011}.
\bibitem[{Waller(2004)}]{waller2004applied}
\bibinfo{author}{L.~Waller},
\newblock \bibinfo{title}{Applied spatial statistics for public health data},
\newblock \bibinfo{journal}{Willey \& Sons}  (\bibinfo{year}{2004}).
\bibitem[{Qu et~al.(2021)Qu, Bradley, and Niu}]{qu2021boundary}
\bibinfo{author}{K.~Qu}, \bibinfo{author}{J.~R. Bradley}, \bibinfo{author}{X.~Niu},
\newblock \bibinfo{title}{Boundary detection using a bayesian hierarchical model for multiscale spatial data},
\newblock \bibinfo{journal}{Technometrics} \bibinfo{volume}{63} (\bibinfo{year}{2021}) \bibinfo{pages}{64--76}.
\bibitem[{Daw et~al.(2022)Daw, Simpson, Wikle, Holan, and Bradley}]{daw2022overview}
\bibinfo{author}{R.~Daw}, \bibinfo{author}{M.~Simpson}, \bibinfo{author}{C.~K. Wikle}, \bibinfo{author}{S.~H. Holan}, \bibinfo{author}{J.~R. Bradley},
\newblock \bibinfo{title}{An overview of univariate and multivariate {K}arhunen-{L}o{\`e}ve expansions in statistics},
\newblock \bibinfo{journal}{Journal of the Indian Society for Probability and Statistics} \bibinfo{volume}{23} (\bibinfo{year}{2022}) \bibinfo{pages}{285--326}.
\bibitem[{Openshaw(1984)}]{openshaw1984modifiable}
\bibinfo{author}{S.~Openshaw},
\newblock \bibinfo{title}{The modifiable areal unit problem},
\newblock \bibinfo{journal}{Concepts and Techniques in Modern Geography} \bibinfo{volume}{38} (\bibinfo{year}{1984}) \bibinfo{pages}{41--57}.
\bibitem[{Anselin(1988)}]{anselin1988spatial}
\bibinfo{author}{L.~Anselin}, \bibinfo{title}{Spatial econometrics: methods and models}, \bibinfo{publisher}{Springer-Verlag}, \bibinfo{year}{1988}.
\bibitem[{Bailey and Gatrell(1995)}]{bailey1995interactive}
\bibinfo{author}{T.~C. Bailey}, \bibinfo{author}{A.~C. Gatrell}, \bibinfo{title}{Interactive spatial data analysis}, \bibinfo{publisher}{Longman Scientific \& Technical}, \bibinfo{year}{1995}.
\bibitem[{Obled and Creutin(1986)}]{obled1986some}
\bibinfo{author}{C.~Obled}, \bibinfo{author}{J.~Creutin},
\newblock \bibinfo{title}{Some developments in the use of empirical orthogonal functions for mapping meteorological fields},
\newblock \bibinfo{journal}{Journal of Applied Meteorology and Climatology} \bibinfo{volume}{25} (\bibinfo{year}{1986}) \bibinfo{pages}{1189--1204}.
\bibitem[{Cho et~al.(2013)Cho, Venturi, and Karniadakis}]{cho2013karhunen}
\bibinfo{author}{H.~Cho}, \bibinfo{author}{D.~Venturi}, \bibinfo{author}{G.~Karniadakis},
\newblock \bibinfo{title}{{K}arhunen-{L}o{\`e}ve expansion for multi-correlated stochastic processes},
\newblock \bibinfo{journal}{Probabilistic Engineering Mechanics} \bibinfo{volume}{34} (\bibinfo{year}{2013}) \bibinfo{pages}{157--167}.
\bibitem[{Happ and Greven(2018)}]{happ2018multivariate}
\bibinfo{author}{C.~Happ}, \bibinfo{author}{S.~Greven},
\newblock \bibinfo{title}{Multivariate functional principal component analysis for data observed on different (dimensional) domains},
\newblock \bibinfo{journal}{Journal of the American Statistical Association} \bibinfo{volume}{113} (\bibinfo{year}{2018}) \bibinfo{pages}{649--659}.
\bibitem[{Karhunen(1947)}]{Karhunen1947}
\bibinfo{author}{K.~Karhunen},
\newblock \bibinfo{title}{Zur spektraltheorie stochastischer prozesse},
\newblock \bibinfo{journal}{Annales Academiae Scientiarum Fennicae} \bibinfo{volume}{37} (\bibinfo{year}{1947}) \bibinfo{pages}{1--79}.
\bibitem[{Lo{\`e}ve(1945)}]{Loeve1945}
\bibinfo{author}{M.~Lo{\`e}ve}, \bibinfo{title}{Probability Theory}, volume~\bibinfo{volume}{1}, \bibinfo{publisher}{Van Nostrand Princeton}, \bibinfo{year}{1945}.
\bibitem[{Aronszajn(1950)}]{aronszajn1950theory}
\bibinfo{author}{N.~Aronszajn},
\newblock \bibinfo{title}{Theory of reproducing kernels},
\newblock \bibinfo{journal}{Transactions of the American Mathematical Society} \bibinfo{volume}{68} (\bibinfo{year}{1950}) \bibinfo{pages}{337--404}.
\bibitem[{Resnick(2013)}]{resnick2013probability}
\bibinfo{author}{S.~I. Resnick}, \bibinfo{title}{A probability path}, \bibinfo{publisher}{Springer Science \& Business Media}, \bibinfo{year}{2013}.
\bibitem[{Sethuraman(1994)}]{sethuraman1994constructive}
\bibinfo{author}{J.~Sethuraman},
\newblock \bibinfo{title}{A constructive definition of dirichlet priors},
\newblock \bibinfo{journal}{Statistica sinica}  (\bibinfo{year}{1994}) \bibinfo{pages}{639--650}.
\bibitem[{Schmidt-Hieber(2021)}]{schmidt2021kolmogorov}
\bibinfo{author}{J.~Schmidt-Hieber},
\newblock \bibinfo{title}{The kolmogorov--arnold representation theorem revisited},
\newblock \bibinfo{journal}{Neural networks} \bibinfo{volume}{137} (\bibinfo{year}{2021}) \bibinfo{pages}{119--126}.
\bibitem[{Journel and Huijbregts(1978)}]{journel1978mining}
\bibinfo{author}{A.~G. Journel}, \bibinfo{author}{C.~J. Huijbregts},
\newblock \bibinfo{title}{Mining geostatistics},
\newblock \bibinfo{journal}{Journal of International Association for Mathematical Geology} \bibinfo{volume}{10} (\bibinfo{year}{1978}) \bibinfo{pages}{395--424}.
\bibitem[{Goulard and Voltz(1992)}]{goulard1992linear}
\bibinfo{author}{M.~Goulard}, \bibinfo{author}{M.~Voltz},
\newblock \bibinfo{title}{Linear coregionalization model: tools for estimation and choice of cross-variogram matrix},
\newblock \bibinfo{journal}{Mathematical Geology} \bibinfo{volume}{24} (\bibinfo{year}{1992}) \bibinfo{pages}{269--286}.
\bibitem[{Gneiting et~al.(2010)Gneiting, Kleiber, and Schlather}]{gneiting2010matern}
\bibinfo{author}{T.~Gneiting}, \bibinfo{author}{W.~Kleiber}, \bibinfo{author}{M.~Schlather},
\newblock \bibinfo{title}{Mat{\'e}rn cross-covariance functions for multivariate random fields},
\newblock \bibinfo{journal}{Journal of the American Statistical Association} \bibinfo{volume}{105} (\bibinfo{year}{2010}) \bibinfo{pages}{1167--1177}.
\bibitem[{Berger(2013)}]{berger2013statistical}
\bibinfo{author}{J.~O. Berger}, \bibinfo{title}{Statistical decision theory and Bayesian analysis}, \bibinfo{publisher}{Springer Science \& Business Media}, \bibinfo{year}{2013}.
\bibitem[{Daw and Wikle(2022)}]{daw2022supervised}
\bibinfo{author}{R.~Daw}, \bibinfo{author}{C.~K. Wikle},
\newblock \bibinfo{title}{Supervised spatial regionalization using the {K}arhunen-{L}o{\`e}ve expansion and minimum spanning trees},
\newblock \bibinfo{journal}{Journal of Data Science} \bibinfo{volume}{20} (\bibinfo{year}{2022}) \bibinfo{pages}{566--584}.
\bibitem[{Genton and Kleiber(2015)}]{genton2015cross}
\bibinfo{author}{M.~G. Genton}, \bibinfo{author}{W.~Kleiber},
\newblock \bibinfo{title}{Cross-covariance functions for multivariate geostatistics},
\newblock \bibinfo{journal}{Statistical Science} \bibinfo{volume}{30} (\bibinfo{year}{2015}) \bibinfo{pages}{147 -- 163}. \URLprefix \url{https://doi.org/10.1214/14-STS487}. \DOIprefix\doi{10.1214/14-STS487}.
\bibitem[{Zellner(1986)}]{zellner1986introduction}
\bibinfo{author}{A.~Zellner}, \bibinfo{title}{An introduction to {B}ayesian inference in {E}conometrics}, \bibinfo{publisher}{John Wiley \& Sons}, \bibinfo{year}{1986}.
\bibitem[{Agliari and Parisetti(1988)}]{li2021zellner}
\bibinfo{author}{A.~Agliari}, \bibinfo{author}{C.~C. Parisetti},
\newblock \bibinfo{title}{A-g reference informative prior: A note on {Z}ellner's g-prior},
\newblock \bibinfo{journal}{Journal of the Royal Statistical Society: Series D (The Statistician)} \bibinfo{volume}{37} (\bibinfo{year}{1988}) \bibinfo{pages}{271--275}.
\bibitem[{Wackernagel et~al.(1989)Wackernagel, Petitgas, and Touffait}]{wackernagel1989overview}
\bibinfo{author}{H.~Wackernagel}, \bibinfo{author}{P.~Petitgas}, \bibinfo{author}{Y.~Touffait},
\newblock \bibinfo{title}{Overview of methods for coregionalization analysis},
\newblock in: \bibinfo{booktitle}{Geostatistics: Proceedings of the Third International Geostatistics Congress September 5--9, 1988, Avignon, France}, \bibinfo{organization}{Springer}, \bibinfo{year}{1989}, pp. \bibinfo{pages}{409--420}.
\bibitem[{Pebesma(2004)}]{pebesma2015package}
\bibinfo{author}{E.~J. Pebesma},
\newblock \bibinfo{title}{Multivariable geostatistics in {S}: the gstat package},
\newblock \bibinfo{journal}{Computers \& Geosciences} \bibinfo{volume}{30} (\bibinfo{year}{2004}) \bibinfo{pages}{683--691}.
\bibitem[{Gräler et~al.(2016)Gräler, Pebesma, and Heuvelink}]{gstat2}
\bibinfo{author}{B.~Gräler}, \bibinfo{author}{E.~Pebesma}, \bibinfo{author}{G.~Heuvelink},
\newblock \bibinfo{title}{Spatio-temporal interpolation using gstat},
\newblock \bibinfo{journal}{The R Journal} \bibinfo{volume}{8} (\bibinfo{year}{2016}) \bibinfo{pages}{204--218}. \URLprefix \url{https://journal.r-project.org/archive/2016/RJ-2016-014/index.html}.
\bibitem[{Rossiter(2007)}]{rossiter2007co}
\bibinfo{author}{D.~Rossiter},
\newblock \bibinfo{title}{Co-kriging with the gstat package of the {R} environment for statistical computing},
\newblock \bibinfo{journal}{Web: http://www. itc. nl/rossiter/teach/R/R ck. pdf}  (\bibinfo{year}{2007}).
\bibitem[{Hastie et~al.(2009)Hastie, Tibshirani, Friedman, and Friedman}]{hastie2009elements}
\bibinfo{author}{T.~Hastie}, \bibinfo{author}{R.~Tibshirani}, \bibinfo{author}{J.~H. Friedman}, \bibinfo{author}{J.~H. Friedman}, \bibinfo{title}{The elements of statistical learning: data mining, inference, and prediction}, volume~\bibinfo{volume}{2}, \bibinfo{publisher}{Springer}, \bibinfo{year}{2009}.
\bibitem[{Guttorp and Gneiting(2006)}]{guttorp2006studies}
\bibinfo{author}{P.~Guttorp}, \bibinfo{author}{T.~Gneiting},
\newblock \bibinfo{title}{Studies in the history of probability and {s}tatistics {XLIX} on the mat{\'e}rn correlation family},
\newblock \bibinfo{journal}{Biometrika} \bibinfo{volume}{93} (\bibinfo{year}{2006}) \bibinfo{pages}{989--995}.
\bibitem[{Leeds et~al.(2014)Leeds, Wikle, and Fiechter}]{leeds2014emulator}
\bibinfo{author}{W.~B. Leeds}, \bibinfo{author}{C.~K. Wikle}, \bibinfo{author}{J.~Fiechter},
\newblock \bibinfo{title}{Emulator-assisted reduced-rank ecological data assimilation for nonlinear multivariate dynamical spatio-temporal processes},
\newblock \bibinfo{journal}{Statistical Methodology} \bibinfo{volume}{17} (\bibinfo{year}{2014}) \bibinfo{pages}{126--138}.
\bibitem[{Wikle et~al.(2013)Wikle, Milliff, Herbei, and Leeds}]{wikle2013modern}
\bibinfo{author}{C.~K. Wikle}, \bibinfo{author}{R.~F. Milliff}, \bibinfo{author}{R.~Herbei}, \bibinfo{author}{W.~B. Leeds},
\newblock \bibinfo{title}{Modern statistical methods in oceanography: a hierarchical perspective},
\newblock \bibinfo{journal}{Statistical Science} \bibinfo{volume}{28} (\bibinfo{year}{2013}) \bibinfo{pages}{466--486}.
\bibitem[{Haidvogel et~al.(2010)Haidvogel, Arango, Budgell, Cornuelle, Curchitser, Di~Lorenzo, Fennel, Geyer, Hermann, Lanerolle, Levin, McWilliams, Miller, Moore, Powell, Shchepetkin, Sherwood, Signell, Warner, and Wilkin}]{doi:10.1175/2010MWR3421.1}
\bibinfo{author}{D.~B. Haidvogel}, \bibinfo{author}{H.~G. Arango}, \bibinfo{author}{W.~P. Budgell}, \bibinfo{author}{B.~D. Cornuelle}, \bibinfo{author}{E.~Curchitser}, \bibinfo{author}{E.~Di~Lorenzo}, \bibinfo{author}{K.~Fennel}, \bibinfo{author}{W.~R. Geyer}, \bibinfo{author}{A.~J. Hermann}, \bibinfo{author}{L.~Lanerolle}, \bibinfo{author}{J.~Levin}, \bibinfo{author}{J.~C. McWilliams}, \bibinfo{author}{A.~J. Miller}, \bibinfo{author}{A.~M. Moore}, \bibinfo{author}{T.~M. Powell}, \bibinfo{author}{A.~F. Shchepetkin}, \bibinfo{author}{C.~R. Sherwood}, \bibinfo{author}{R.~P. Signell}, \bibinfo{author}{J.~C. Warner}, \bibinfo{author}{J.~Wilkin},
\newblock \bibinfo{title}{Ocean forecasting in terrain-following coordinates: formulation and skill assessment of the regional ocean modeling system},
\newblock \bibinfo{journal}{Monthly Weather Review} \bibinfo{volume}{138} (\bibinfo{year}{2010}) \bibinfo{pages}{619--646}. \URLprefix \url{https://doi.org/10.1175/2010MWR3421.1}. \DOIprefix\doi{10.1175/2010MWR3421.1}.
\bibitem[{Ghanem and Spanos(1991)}]{ghanem1991stochastic}
\bibinfo{author}{R.~Ghanem}, \bibinfo{author}{P.~D. Spanos},
\newblock \bibinfo{title}{Stochastic finite elements: a spectral approach},
\newblock \bibinfo{journal}{Springer Science \& Business Media}  (\bibinfo{year}{1991}).
\bibitem[{Duque et~al.(2007)Duque, Ramos, and Suri{\~n}ach}]{duque2007supervised}
\bibinfo{author}{J.~C. Duque}, \bibinfo{author}{R.~Ramos}, \bibinfo{author}{J.~Suri{\~n}ach},
\newblock \bibinfo{title}{Supervised regionalization methods: a survey},
\newblock \bibinfo{journal}{International Regional Science Review} \bibinfo{volume}{30} (\bibinfo{year}{2007}) \bibinfo{pages}{195--220}.
\bibitem[{Alexandrov and Kobarg(2011)}]{alexandrov2011efficient}
\bibinfo{author}{T.~Alexandrov}, \bibinfo{author}{J.~H. Kobarg},
\newblock \bibinfo{title}{Efficient spatial segmentation of large imaging mass spectrometry datasets with spatially aware clustering},
\newblock \bibinfo{journal}{Bioinformatics} \bibinfo{volume}{27} (\bibinfo{year}{2011}) \bibinfo{pages}{i230--i238}.
\bibitem[{Guha et~al.(1998)Guha, Rastogi, and Shim}]{guha1998cure}
\bibinfo{author}{S.~Guha}, \bibinfo{author}{R.~Rastogi}, \bibinfo{author}{K.~Shim},
\newblock \bibinfo{title}{{CURE}: An efficient clustering algorithm for large databases},
\newblock in: \bibinfo{booktitle}{ACM SIGMOD Conference}, \bibinfo{organization}{ACM}, \bibinfo{year}{1998}.
\bibitem[{Chavent et~al.(2018)Chavent, Kuentz-Simonet, Labenne, and Saracco}]{chavent2018clustgeo}
\bibinfo{author}{M.~Chavent}, \bibinfo{author}{V.~Kuentz-Simonet}, \bibinfo{author}{A.~Labenne}, \bibinfo{author}{J.~Saracco},
\newblock \bibinfo{title}{Clustgeo: an {R} package for hierarchical clustering with spatial constraints},
\newblock \bibinfo{journal}{Computational Statistics} \bibinfo{volume}{33} (\bibinfo{year}{2018}) \bibinfo{pages}{1799--1822}.
\bibitem[{Assun{\c{c}}{\~a}o et~al.(2006)Assun{\c{c}}{\~a}o, Neves, C{\^a}mara, and da~Costa~Freitas}]{assunccao2006efficient}
\bibinfo{author}{R.~M. Assun{\c{c}}{\~a}o}, \bibinfo{author}{M.~C. Neves}, \bibinfo{author}{G.~C{\^a}mara}, \bibinfo{author}{C.~da~Costa~Freitas},
\newblock \bibinfo{title}{Efficient regionalization techniques for socio-economic geographical units using minimum spanning trees},
\newblock \bibinfo{journal}{International Journal of Geographical Information Science} \bibinfo{volume}{20} (\bibinfo{year}{2006}) \bibinfo{pages}{797--811}.
\bibitem[{Teixeira et~al.(2019)Teixeira, Assun{\c{c}}{\~a}o, and Loschi}]{teixeira2019bayesian}
\bibinfo{author}{L.~V. Teixeira}, \bibinfo{author}{R.~M. Assun{\c{c}}{\~a}o}, \bibinfo{author}{R.~H. Loschi},
\newblock \bibinfo{title}{Bayesian space-time partitioning by sampling and pruning spanning trees.},
\newblock \bibinfo{journal}{J. Mach. Learn. Res.} \bibinfo{volume}{20} (\bibinfo{year}{2019}) \bibinfo{pages}{85--1}.
\bibitem[{Luo et~al.(2021)Luo, Sang, and Mallick}]{luo2021bayesian}
\bibinfo{author}{Z.~T. Luo}, \bibinfo{author}{H.~Sang}, \bibinfo{author}{B.~Mallick},
\newblock \bibinfo{title}{A {B}ayesian contiguous partitioning method for learning clustered latent variables},
\newblock \bibinfo{journal}{The Journal of Machine Learning Research} \bibinfo{volume}{22} (\bibinfo{year}{2021}) \bibinfo{pages}{1748--1799}.

\end{thebibliography}
\end{document}